\newtheorem{Lemma}{Lemma}
\newtheorem{Corollary}{Corollary}
\newtheorem{lemma}[Lemma]{$\mathbf{Lemma}$}
\newtheorem{corollary}[Corollary]{$\mathbf{Corollary}$}
\begin{document}
\title{ {\huge  Random Beamforming in Millimeter-Wave NOMA Networks}}

\author{ Zhiguo Ding, \IEEEmembership{Senior Member, IEEE}, Pingzhi Fan, \IEEEmembership{Fellow, IEEE}   and  H. Vincent Poor, \IEEEmembership{Fellow, IEEE}\thanks{
Parts of the material in this
paper were presented at IEEE International Conference in Communications, France, France.
Z. Ding and H. V. Poor   are with the Department of
Electrical Engineering, Princeton University, Princeton, NJ 08544,
USA.   Z. Ding is also with the School of
Computing and Communications, Lancaster
University, LA1 4WA, UK. P. Fan is with the Institute of Mobile
Communications, Southwest Jiaotong University, Chengdu, China.

 The work of Z. Ding was supported by the UK Engineering and Physical Science Research Council under grant number EP/L025272/1 and by the European Commission Program H2020-MSCA-RISE-2015 under grant number 690750. The work of P. Fan  was supported by the Huawei HIRP Flagship Project
(No.YB201504), the National Science and Technology Major Project
(No.2016ZX03001018-002) and the 111 project (No.111-2-14). The work of H. V. Poor was supported  by the U.S. National Science
Foundation under Grant CCF-1420575 and CNS-1456793.
}\vspace{-2em}} \maketitle
\begin{abstract}
This paper investigates the coexistence between two key enabling technologies for   fifth generation (5G) mobile networks, non-orthogonal multiple access (NOMA) and millimeter-wave (mmWave) communications. Particularly, the application of random beamforming to   mmWave-NOMA systems  is considered, in order to avoid the requirement that the base station know all the users' channel state information. Stochastic geometry is used  to characterize the performance of the proposed mmWave-NOMA transmission scheme, by using the key features of mmWave systems, e.g.,  mmWave transmission is highly directional and potential blockages will thin the user distribution. Two random beamforming approaches that can further reduce the system overhead  are also proposed, and their performance is   studied analytically in terms of    sum rates and   outage probabilities. Simulation results are also provided to demonstrate the performance of the proposed   schemes and verify the accuracy of the developed analytical results.
\end{abstract}\vspace{-1em}
\section{Introduction}
Non-orthogonal multiple access (NOMA) has recently received considerable  attention as a promising multiple access (MA) technique to be used in   fifth generation (5G) mobile  networks \cite{docom, metis}. Compared to conventional orthogonal multiple access (OMA), such as time division multiple access andor frequency division multiple access, NOMA encourages spectrum sharing among multiple users, rather than serving a single user in one orthogonal bandwidth block \cite{7021086, Nomading}. Sophisticated power allocation policies and detection methods, such as cognitive radio inspired power allocation,  superposition coding and  successive interference cancellation (SIC), are used to combat the co-channel interference which is not presented in OMA cases \cite{7263349, Zhiguo_CRconoma}. It is worth pointing out that the use of  NOMA can  still effectively support massive connectivity and efficiently meet   users' diverse QoS requirements, even if the users have similar channel conditions \cite{Zhiguo_iot}.

As an promising enabling technology  for 5G networks, NOMA has been shown to be compatible to many other 5G techniques, such as massive multiple-input multiple-output (MIMO), cognitive radio networks, as well as other types of MA techniques, such as orthogonal frequency division multiple  access (OFDMA) \cite{7398134,Zhiguo_massive, Robertnoma}. The purpose of this paper is to investigate the coexistence between NOMA and another important 5G technique, millimeter-wave (mmWave) communications \cite{6515173, 7434598, 7279196, 7110547,5621983}. Even though more bandwidth resources are available at very high frequencies,   the use of NOMA is still important for the following  reasons:
  \begin{itemize}
  \item The highly directional  feature of mmWave transmission implies that users' channels can be highly correlated, which potentially degrades the system performance. But such correlation is ideal for the application of NOMA.
  \item   The combination supports massive connectivity in  dense networks, e.g., where there are  hundreds of users to be connected in a small area.
       \item The rapid growth of mobile Internet services, particularly  emerging virtual reality (VR) and augmented reality (AR) services,  will dwarf the   radio spectrum gains obtained from the mmWave bands, which means that further improvement of the spectral efficiency is still important.

  \end{itemize}

In this paper, we consider a mmWave-NOMA downlink scenario, in which a base station equipped with multiple antennas communicates with multiple single-antenna nodes. While MIMO-NOMA has been extensively studied in \cite{7095538, 7383326,Zhiguo_mimoconoma}, the application of mmWave communications makes the addressed MIMO-NOMA scenario much different, mainly due to the characteristics   of mmWave propagation. The contributions of this paper are four-fold:

\begin{itemize}
\item We first  consider the application of random beamforming to the addressed mmWave-NOMA scenario, in which a single beam is randomly generated by the base station. While  random beamforming does not require the base station to know  all the users' channel vectors, conventional random beamforming still   requires all the users to send their scale channel gains to the base station, which can   consume significant  system overhead in a network with a large number of users. The fact that mmWave transmission  is highly directional is used in this paper to avoid scheduling those users who are likely to have low signal strength, which reduces the number of users who need to feed their channel quality information back to the base station and hence   reduces the system overhead. Stochastic geometry is applied to characterize the sum rate and the outage probabilities achieved by the proposed beamforming scheme, where the blockage feature of mmWave propagation  is also used to model the user distribution more realistically.

    \item In a fast time varying situation, in which the phases and the amplitudes of the users' channel gains change rapidly, a low-feedback transmission scheme is proposed by assuming that only the users' distance information is available to the base station. As a result, the users are ordered according to their path losses, instead of their effective channel gains. The impact of this partial channel state information (CSI) on the performance of the mmWave-NOMA downlink network is investigated.

\item A one-bit feedback random beamforming scheme is also proposed in order to further reduce the system overhead.   In particular, the base station sets a threshold which is broadcast to the users. Each user feeds one bit back to the base station to indicate its channel quality. The use of one-bit feedback can effectively reduce the amount of feedback, but will cause an ordering ambiguity at the base station. The impact of this ambiguity on the performance of the one-bit feedback transmission scheme is investigated. Furthermore, the effect of the threshold is also characterized, where the obtained analytical results show that a properly designed  threshold can ensure that the full diversity gain is achievable by the user selected to be the NOMA strong user.

\item The performance for the more challenging  scenario in which the base station generates multiple orthonormal beams is also investigated. Compared to the case with a single beam, each user in the scenario with multiple beams suffers more interference, including intra NOMA group interference and inter-beam interference. Because mmWave transmission is highly directional, inter-beam interference can be effectively suppressed by scheduling the users whose channel vectors are aligned with the randomly generated beams. Exact expressions for the outage probabilities achieved by the random beamforming scheme and their approximations are developed  in order to obtain greater insights.
\end{itemize}

\section{System Model}
Consider a mmWave-NOMA downlink transmission scenario with one base station   communicating  with multiple users, as shown in Fig. \ref{system_model}.   The base station is equipped with   $M$ antennas and each user has a single antenna.  Denote the disk   covered by the base station by $\mathcal{D}$. Assume that the base station is located at the origin of $\mathcal{D}$ and denote the   radius of the disk   by $R_{\mathcal{D}}$.    Assume that users are randomly deployed in the disc  following a homogeneous Poisson point process (HPPP) with density $\lambda$ \cite{Haenggi}.   Therefore, the number of users in the disk is Poisson distributed, i.e., $P(\text{K users in } \mathcal{D})=\frac{\mu^Ke^{-\mu}}{K!}$, where $\mu=\pi {R_{\mathcal{D}}}^2\lambda$.

\begin{figure}[!t]
\centering
\includegraphics[width=0.3\textwidth]{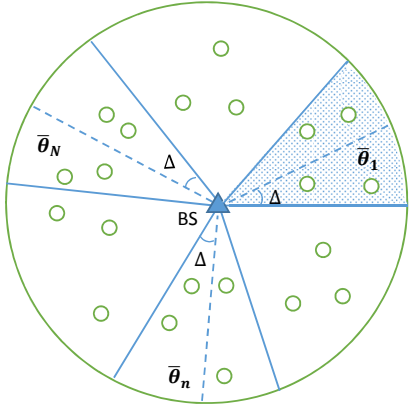}\vspace{-0.5em}
\caption{A system diagram for the addressed mmWave-NOMA scenario.  $\bar{\theta}_n$ denotes a beamforming vector randomly generated by the base station. Only the users that fall into a specific wedge-shaped sector will be scheduled, which is to ensure that the maximal angle difference between a scheduled user's channel vector and its associated beam is $\Delta$.  } \label{system_model}\vspace{-1em}
\end{figure}


As discussed in \cite{7434598} and \cite{7279196}, the mmWave channel model is quite different from those of   conventional lower frequency cellular networks; in particular,   the mmWave-based channel vector from the base station to user $k$ can be expressed as follows:
\begin{align}\label{channel model}
\mathbf{h}_k =  \sqrt{M} \frac{a_{k,0}\mathbf{a}(\theta^0_{k})}{\sqrt{ 1+d_k^{\alpha_{LOS}}}}+ \sqrt{M}\sum^{L}_{l=1}\frac{a_{k,l}\mathbf{a}(\theta^l_{k})}{\sqrt{1+ d_k^{\alpha_{NLOS}}}},
\end{align}
where $L$ is the number of multi-paths,  \begin{align}
\mathbf{a}(\theta) = \frac{1}{\sqrt{M}}\begin{bmatrix} 1 & e^{-j\pi \theta} &\cdots &e^{-j\pi(M-1) \theta} \end{bmatrix}^T,
\end{align}
$d_k$ denotes the distance between the transceivers, $\alpha_{NLOS}$ and $\alpha_{LOS}$ denote the path loss exponents for the non-line-of-sight (NLOS) and line-of-sight (LOS) paths, respectively,
$a_{k,l}$ denotes the complex gain for the $l$-th path and is complex Gaussian distributed, i.e.,  $a_{k,l}\sim CN(0,1)$, and $\theta_{k}^l$ denotes the normalized direction of the $l$-th path. We assume that the channel gains are independent from path to path. For notational simplicity,   the normalized direction of a path is treated the same as its physical angle of departure, and in Section \ref{subsection application}, we can show this simplification has no impact on the performance  of the proposed mmWave-NOMA scheme.

As discussed in  \cite{7279196} and \cite{6363891}, in mmWave communications,  the effect of LOS links is dominant, compared to those of NLOS links, e.g.,  the gain of an LOS link can be $20$ dB stronger than those of NLOS links. Therefore the first factor at the right-hand side of  \eqref{channel model} is dominant, which yields the following simplified channel model:
\begin{align}
\mathbf{h}_k =  \sqrt{M} \frac{a_{k}\mathbf{a}(\theta_{k})}{\sqrt{ 1+d_k^{\alpha}}},
\end{align}
where the subscripts $0$ and LOS have been omitted to simplify the notation.

In practice, the direct path between the mmWave transceivers might be blocked by obstacles, which means that  an LOS path does not always exist. As a result, in addition to  path loss and fading attenuation,  mmWave transmission also suffers potential blockages,  which is an important feature to be captured. A simple way to model these blockages is to assume the existence of an LOS path if the distance between the transceivers  is smaller than a threshold \cite{7110547} and \cite{7360135}. Alternatively, a more sophisticated way   to model the probability of there being an LOS path for mm-Wave transmission has been introduced in \cite{5621983, 7448962} as follows:
\begin{align}
\label{blorckage}
\mathrm{P}(LOS) = e^{-\phi d_k},
\end{align}
where $\phi$ is determined by the building density, the shape of the buildings, etc. In this paper, we will use \eqref{blorckage} for modelling blockages in the addressed mmWave communication scenario. It is important to point out that these  blockages will thin  the node distribution, which will be discussed in detail in the next section.

\section{Random Beamforming: A Single-Beam Case}\label{section 2}
  Many existing precoding and beamforming schemes for NOMA require that the base station has access to the users' CSI. These approaches can consume a substantial  amount of system overhead, if there are many users in the system. In order to reduce the system overhead, we consider the application of random beamforming to mmWave-NOMA communication scenarios.
\subsection{The Application of Random Beamforming to NOMA}\label{subsection application}
In this section, we focus on the case in which a single beam, denoted by $\mathbf{p}$, is generated at the base station. Note that in the context of mmWave communications, analog  precoding is  preferable compared to digital precoding since the amplitude of a signal is kept constant and only the phase is changed. Therefore, following \cite{7279196} and \cite{7397837}, we use the following choice for beamforming:
 \begin{align}
 \mathbf{p} = \mathbf{a}(\bar{\theta}),
 \end{align}
 where $\bar{\theta}$ is uniformly distributed between $-1$ and $1$.  This choice of precoding is analog precoding since it alters the signal phase only, and keeps the signal modulus constant. It is worth pointing out that this beamformer  is also a special case of the  hybrid precoding design in \cite{7445130} with one radio frequency chain and $M$ antennas.

One straightforward solution for user scheduling is to ask each user to feed its effective channel gain $|\mathbf{h}_{j}^H   \mathbf{p}|^2$ back to the base station, and then the base station schedules the user with  the strongest channel. However, such an approach will still consume considerable  system overhead, particularly if there are many users in the cell.

In the context of mmWave communications, a useful observation is that many users do not have to participate in the competition for access to the channel, as explained in the following. Without loss of generality, user $j$ is randomly chosen to be served on beam $\mathbf{p}$. The effective channel gain of this user on the randomly generated beam, $|\mathbf{h}_{j}^H   \mathbf{p}|^2$, can be written as follows:
\begin{align}
|\mathbf{h}_{j}^H   \mathbf{p}|^2 &= M\frac{|a_{j}|^2|\mathbf{p}^H\mathbf{a}(\theta_{j})|^2}{{1+ d_j^{\alpha}}}=\frac{|a_{j}|^2\left| \sum^{M-1}_{l=0}e^{-j\pi l (\bar{\theta}-\theta_j)}  \right|^2}{{M(1+ d_j^{\alpha})}}.
\end{align}
Following   steps similar to those in \cite{7279196}, this effective channel gain can be rewritten as follows:
\begin{align}
|\mathbf{h}_{j}^H   \mathbf{p}|^2
&= \frac{|a_{j}|^2 \sin^2\left(\frac{\pi M (\bar{\theta} - \theta_j)}{2}\right) }{{M(1+ d_j^{\alpha})\sin^2\left(\frac{\pi  (\bar{\theta} - \theta_j)}{2}\right)}}
\\\nonumber &= \frac{|a_{j}|^2   }{{(1+ d_j^{\alpha})  }}F_M\left(\pi[\bar{\theta} - \theta_j]\right),
\end{align}
where $F_M(x)$ denotes the Fej\'er kernel. Note that   the Fej\'er kernel goes to zero quickly by increasing its argument, i.e.,  $F_M(x)\rightarrow 0$ for increasing  $x$. This means that a user can have a large effective channel gain on  beam $\mathbf{p}$   if this user's channel vector is aligned with the direction of the beam.

Following this observation, we will  schedule only the users who are located  in the wedge-shaped sector served by the beam, as highlighted   in Fig.~\ref{system_model}. Particularly, this   sector is denoted by $\mathcal{D}_{\theta}$, and its  central angle  is $2\Delta$, which means that the maximal angle difference between a scheduled user's channel vector and the beam is $\Delta$, and $\Delta\rightarrow 0$ is required to ensure a large effective channel gain. Note that,   when $\Delta\rightarrow 0$, the use of the normalized direction of a path to replace its  physical angle of departure has no impact on the performance of the proposed scheme, as illustrated in the following. Recall that the normalized direction $\theta$ is a function of the physical angle of departure, denoted by $\phi_\theta$, i.e., $\theta=\frac{2d \sin (\phi_\theta)}{\lambda}$, where $\lambda$ and $d$ are the carrier wavelength and the antenna separation distance, respectively. If $\Delta\rightarrow 0$, we have $|\bar{\theta}-\theta_j|\rightarrow 0$, and hence $|\phi_{\bar{\theta}} -\phi_{\theta_j}|\rightarrow 0$, which means that the two physical angles are very similar if the two normalized directions are similar.  Furthermore, as $\Delta\rightarrow 0$, the application of Taylor series leads to $\theta_j-\bar{\theta}\approx \frac{2d  \cos (\phi_{\bar{\theta}})}{\lambda} (\phi_{\theta_j}-\phi_{\bar{\theta}})$, and so our analytical results based on the normalized directions can be extended to the case with the physical angles   in a straightforward manner.

 \subsection{The Implementation of NOMA}
  Suppose  that there are $K$ users in the sector, $\mathcal{D}_{\theta}$, and these users are ordered according to their effective channel gains  as follows:
\begin{align}\label{order}
|\mathbf{h}_{1}^H   \mathbf{p}|^2 \leq \cdots \leq |\mathbf{h}_{K}^H   \mathbf{p}|^2.
\end{align}
Similarly to \cite{Zhiguo_CRconoma} and \cite{7095538}, we consider the case in which two users will be selected for the implementation of NOMA. Note that the implementation of NOMA in long term evolution advanced (LTE-A) is also based on the two-user case \cite{3gpp1}.  Since the aim of this paper is to study the impact of NOMA on mmWave communications, without loss of generality, we assume that user $i$  and user $j$  are paired together for NOMA transmission on a randomly generated beam. Note that $i$ and $j$ can be chosen arbitrarily, constrained by $1\leq i< j\leq K$. As a result, the performance of mmWave-NOMA with different scheduled users can be investigated, and the insights obtained from the performance analysis can offer guidelines for the  design of practical user scheduling algorithms. Therefore,   the signal sent by the base station is given by
 \begin{align}
 \mathbf{p} \left(\beta_{i} s_{i}+\beta_{j} s_{j}\right),
 \end{align}
 where $\beta_i$ denotes the power allocation coefficient. Since $|\mathbf{h}_{i}^H   \mathbf{p}|^2<|\mathbf{h}_{j}^H   \mathbf{p}|^2$, the application of NOMA means  $\beta_i\geq \beta_j$, where $\beta_{i}^2+\beta_{j}^2=1$.

  Therefore,  user $i$ will receive the following observation:
 \begin{align}
 y_{i} =& \mathbf{h}_{i}^H   \mathbf{p} \left(\beta_{i} s_{i}+\beta_{j} s_{j}\right)+n_{i},
 \end{align}
 where $n_{i}$ denotes additive Gaussian noise. User $i$ will treat its partner's message as noise and directly decode its   information  with the following signal-to-interference-plus-noise ratio (SINR):
 \begin{align}
  {\text{SINR}}_{i} =&\frac{|\mathbf{h}_{i}^H   \mathbf{p}|^2 \beta^2_{i} }{|\mathbf{h}_{i}^H   \mathbf{p}|^2 \beta^2_{ j}  +\frac{1}{\rho}} ,
 \end{align}
 where $\rho$ denotes the transmit signal-to-noise ratio (SNR).
As a result, the outage probability for user $i$ to decode its information is given by
\begin{align}
\mathrm{P}_{i|K}^o &= \mathrm{P}\left(\log(1+{\text{SINR}}_{i})<R_i|K\right)
= \mathrm{P}\left( {\text{SINR}}_{i}<\epsilon_i|K\right),
\end{align}
which is conditioned on the number of users in $\mathcal{D}_{\theta}$,
where $\epsilon_i=2^{R_i}-1$.

User $j$ first tries to decode its partner's message with the following SINR: $  {\text{SINR}}_{i\rightarrow j} =\frac{|\mathbf{h}_{j}^H   \mathbf{p}|^2 \beta^2_{i} }{|\mathbf{h}_{j}^H   \mathbf{p}|^2 \beta^2_{j} +\frac{1}{\rho}}$.
If ${\text{SINR}}_{i\rightarrow j} \geq \epsilon_{ i}$,   the user can decode its own message with the following SNR:
\begin{align}
 \text{SINR}_{j} =&\rho |\mathbf{h}_{j}^H   \mathbf{p}|^2 \beta^2_{j} ,
 \end{align}
 after removing its partner's information, a procedure known as SIC.
 Therefore the outage probability experienced by user $j$ can be expressed as follows:
 \begin{align}\label{overall outage}
\mathrm{P}_{j|K}^o =  1 - \mathrm{P}\left( {\text{SINR}}_{i\rightarrow j} >\epsilon_i,\text{SINR}_{j}>\epsilon_j |K\right),
\end{align}
which is again conditioned on $K$.

 As a result, the outage sum rate achieved by the mmWave-NOMA transmission scheme can be expressed as follows:
 \begin{align}
 R^{NOMA}_{sum} &= \mathrm{P}(K=1) (1-\mathrm{P}_{OMA}^{1|K})R_{1}+ \sum^{\infty}_{k=2}\mathrm{P}(K=k) \nonumber  \\ \label{sum rate} &\times  \left((1-\mathrm{P}_{i|K}^o)R_i + (1-\mathrm{P}_{j|K}^o)R_j\right),
 \end{align}
 and the sum rate achieved by mmWave-OMA can be expressed similarly as follows:
  \begin{align}
 R^{OMA}_{sum} &= \mathrm{P}(K=1) (1-\mathrm{P}_{OMA}^{1|K})R_{1}+ \sum^{\infty}_{k=2}\mathrm{P}(K=k) \nonumber \\\label{sum rate1} &\times\left((1-\mathrm{P}_{OMA}^{i|K})R_i + (1-\mathrm{P}_{OMA}^{j|K})R_j\right),
 \end{align}
 where $\mathrm{P}_{OMA}^{n|K}$ denotes the conditional outage probability when OMA is used. The reason for using the OMA mode in \eqref{sum rate} is that it is possible to have a single   user in $\mathcal{D}_{\theta}$. In this case,  NOMA cannot be implemented and we simply use OMA, i.e., $\mathrm{P}_{OMA}^{n|K}=
 \mathrm{P}\left(\log(1+\rho |\mathbf{h}_{n}^H   \mathbf{p}|^2)  <2R_{n} \right)$, for $n \in\{i,j\}$\footnote{One can also use $\mathrm{P}_{OMA}^{1|K}=
 \mathrm{P}\left(\log(1+\rho |\mathbf{h}_{1}^H   \mathbf{p}|^2)  <R_{1} \right)$ for the case $K=1$, which will make the notation in \eqref{sum  rate} and \eqref{sum  rate1} more complicated. It is worth pointing out that the probability of having  $K=1$ is very small and different designs for this trivial case do not cause much difference to the overall sum rate. }.


\subsection{Characterization of the Sum Rate and Outage Probabilities}
In order to evaluate the sum rate shown in \eqref{sum rate}, it is important to find expressions for the outage probabilities, $\mathrm{P}_{j|K}^o$ and $\mathrm{P}_{i|K}^o$; these  are   related to the probability density function (pdf) of the ordered channel gain, $|\mathbf{h}_{j}^H   \mathbf{p}|^2$, which is provided in the following lemma.
\begin{lemma}
Suppose that there are $K$ users in $\mathcal{D}_{\theta}$. The pdf of the ordered channel gain, $|\mathbf{h}_{j}^H   \mathbf{p}|^2$, is given by
\begin{align}
f_{|\mathbf{h}_{j}^H   \mathbf{p}|^2}(z) = c_j \frac{dF_{\pi(j)}(z)}{dz}F^{j-1}_{\pi(j)}(z)\left(1-F_{\pi(j)}(z)\right)^{K-j},
\end{align}
where  $c_j=\frac{K!}{(j-1)!(K-j)!}$,
\begin{align}
F_{\pi(j)}(y)=    \int^{\bar{\theta}+\Delta}_{\bar{\theta}-\Delta}\int^{R_{\mathcal{D}}}_{0}\left(1- e^{-
   \frac{{y(1+ r^{\alpha})  }}{  F_M\left(\pi[\bar{\theta} - \theta]\right) }
  }\right)\\\nonumber \times  \frac{\lambda\phi^{2} e^{-\phi r}}{2\Delta \lambda\gamma(2, R_{\mathcal{D}}\phi )}rdrd\theta,
\end{align} and  $\gamma(\cdot)$ denotes the incomplete gamma function.
\end{lemma}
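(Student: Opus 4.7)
The plan is to recover the pdf from the classical order statistic identity
\[
f_{|\mathbf{h}_{j}^H\mathbf{p}|^2}(z)=\frac{K!}{(j-1)!(K-j)!}\,f_{\pi(j)}(z)\,F_{\pi(j)}(z)^{j-1}\bigl(1-F_{\pi(j)}(z)\bigr)^{K-j},
\]
after identifying $F_{\pi(j)}$ as the common CDF of the (unordered) effective channel gain $|\mathbf{h}^H\mathbf{p}|^2$ for a generic scheduled user lying in $\mathcal{D}_\theta$. The binomial constant $c_j$ then matches automatically, so the bulk of the work is the derivation of $F_{\pi(j)}(y)$.

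First, I would determine the joint conditional pdf of the position $(r,\theta)$ of a typical scheduled user, given that it lies in $\mathcal{D}_\theta$ and enjoys an LOS link (only LOS terms appear in the simplified channel). By independent thinning of the HPPP through the blockage probability $e^{-\phi r}$, the resulting intensity in $\mathcal{D}_\theta$ expressed in polar coordinates is $\lambda r e^{-\phi r}$. Using $\int_0^{R_\mathcal{D}} r e^{-\phi r}\,dr=\phi^{-2}\gamma(2,R_\mathcal{D}\phi)$, the normalizing constant is $2\Delta\lambda\phi^{-2}\gamma(2,R_\mathcal{D}\phi)$, giving
\[
f_{r,\theta}(r,\theta)=\frac{\lambda\phi^{2} e^{-\phi r}}{2\Delta\lambda\gamma(2,R_\mathcal{D}\phi)}\,r,\qquad r\in[0,R_\mathcal{D}],\;\theta\in[\bar{\theta}-\Delta,\bar{\theta}+\Delta],
\]
which is exactly the last factor in the integrand of the stated $F_{\pi(j)}$.

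Next, I would condition on $(r,\theta)$ and exploit the fact that $a\sim\mathcal{CN}(0,1)$ makes $|a|^2\sim\mathrm{Exp}(1)$. Since the earlier derivation gives $|\mathbf{h}^H\mathbf{p}|^2=\frac{|a|^2}{1+r^\alpha}F_M(\pi[\bar{\theta}-\theta])$, the conditional CDF is
\[
\Pr\!\bigl(|\mathbf{h}^H\mathbf{p}|^2\le y \,\big|\, r,\theta\bigr)=1-\exp\!\left(-\frac{y(1+r^\alpha)}{F_M(\pi[\bar{\theta}-\theta])}\right).
\]
Averaging this against $f_{r,\theta}$ produces precisely the double integral stated for $F_{\pi(j)}(y)$, so differentiating gives $f_{\pi(j)}=dF_{\pi(j)}/dz$ and the lemma follows by plugging into the order statistic formula.

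The main obstacle I expect is justifying the i.i.d. structure needed to invoke the order statistic formula: conditional on the event ``$K$ LOS users in $\mathcal{D}_\theta$,'' the standard conditioning property of PPPs (together with independent blockage thinning) must be used to argue that the $K$ user positions are i.i.d. with the pdf $f_{r,\theta}$, and the per-user fading independence assumption in \eqref{channel model} must be invoked to conclude that the resulting $K$ values of $|\mathbf{h}_k^H\mathbf{p}|^2$ are i.i.d. copies of a common random variable with CDF $F_{\pi(j)}$. Once this independence is established, the rest is bookkeeping: expanding the exponential integrand and confirming that the combinatorial prefactor $c_j$ agrees with $\binom{K}{j}j$ from the order statistic identity.
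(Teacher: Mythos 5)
Your proposal is correct and follows essentially the same route as the paper's proof: thin the HPPP by the blockage probability $e^{-\phi r}$ to obtain the location density on $\mathcal{D}_\theta$, condition on position to get the exponential-tail conditional CDF from $|a|^2\sim\mathrm{Exp}(1)$, average to obtain $F_{\pi(j)}$, and invoke the standard order-statistics identity. Your added care in justifying the i.i.d.\ structure of the $K$ conditional positions (via the PPP conditioning property) is a point the paper glosses over, but it does not change the argument.
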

\begin{proof}
The density function in the lemma can be evaluated  by first characterizing the unordered channel gains and then applying the theory of order  statistics.

First we focus on an unordered channel gain, denoted by $|\mathbf{h}_{\pi(j)}^H   \mathbf{p}|^2$. Denote the location of this node by $x_{\pi(j)}$, where its probability distribution and pdf are denoted by  $P_{X_{\pi(j)}}$ and $p_{X_{\pi(j)}}$, respectively.  In this case we can find the cumulative distribution function (CDF) of the unordered channel gain as follows:
\begin{align}\nonumber
F_{{\pi(j)}}(y) &=  \underset{\mathcal{D}_{\theta}}{\int}\mathrm{P}\left(|\mathbf{h}_{\pi(j)}^H   \mathbf{p}|^2<y~|~ X_{\pi(j)}=x_{\pi(j)}\right)dP_{X_{\pi(j)}}\\\nonumber &=  \underset{\mathcal{D}_{\theta}}{\int}\left(1- e^{-
   \frac{{y(1+ r(x)^{\alpha})  }}{  F_M\left(\pi[\bar{\theta} - \theta_{\pi(j)}]\right) }
  }\right)p_{X_{\pi(j)}}(x)dx,
\end{align}
where $r(x)$ denotes the distance from the origin to   point $x$. Note that the conditioning on  $K$ has been omitted since it does not affect the CDF.

It is important to note that the nodes participating in NOMA  no longer follow the original HPPP with parameter $\lambda$, because of potential blockages. Particularly, with the blockage model in \eqref{blorckage}, it is less likely for  a user far away from the base station to have an LOS path. Therefore,  following the discussions in \cite{Wangpoor11}, the effect of blockages is to thin the original homogeneous point process and this thinning process yields another PPP with the following intensity:
\begin{align}
\lambda_{\Phi_2}(x) = \lambda e^{-\phi r(x)}.
\end{align}
Therefore,  the mean measure for this new PPP, denoted by $\mu_{\Phi_2}(\mathcal{D}_\theta)$, can be obtained as follows:
\begin{align}\label{mu}
\mu_{\Phi_2}(\mathcal{D}_\theta) &=\underset{\mathcal{D}_\theta}{\int} \lambda_{\Phi_2}(x) dx\\\nonumber &
=\int^{\bar{\theta}+\Delta}_{\bar{\theta}-\Delta}\int^{R_{\mathcal{D}}}_{0}\lambda e^{-\phi r}r dr d\theta = 2\Delta \lambda \phi^{-2}\gamma(2, R_{\mathcal{D}}\phi ).
\end{align}

As a result, after considering potential blockages, the probability of having  $K$ users in the sector, $\mathcal{D}_{\theta}$, can be obtained as follows:
\begin{align}\label{k1}
\mathrm{P}(K=k) = \frac{\left(\mu_{\Phi_2}(\mathcal{D}_\theta)\right)^k}{k!}e^{-\mu_{\Phi_2}(\mathcal{D}_\theta)}.
\end{align}
Since the intensity and the mean measure of the new PPP are known,   the pdf of $x_{\pi(j)}$ can be written as follows:
\begin{align}
p_{X_{\pi(j)}}(x) &=\frac{\lambda_{\Phi_2}(x)}{\mu_{\Phi_2}(\mathcal{D}_\theta) } =\frac{\lambda\phi^{2} e^{-\phi r(x)}}{2\Delta \lambda\gamma(2, R_{\mathcal{D}}\phi )}.
\end{align}
Accordingly, the CDF of the unordered channel gain can be written as follows:
\begin{align}\nonumber
F_{\pi(j)}(y)=  \underset{\mathcal{D}_{\theta}}{\int}\left(1- e^{-
   \frac{{y(1+ r(x)^{\alpha})  }}{  F_M\left(\pi[\bar{\theta} - \theta_{\pi(j)}]\right) }
  }\right)\frac{\lambda\phi^{2} e^{-\phi r(x)}}{2\Delta \lambda\gamma(2, R_{\mathcal{D}}\phi )}dx,
\end{align}
and  by using polar coordinates, the expression for $F_{\pi(j)}(y)$ in the lemma can be obtained.
After using the assumption  that all the channel gains are independent and identically distributed and also  applying the theory of order  statistics \cite{David03}, the proof is complete.
\end{proof}
By applying the above lemma and also  some algebraic manipulations, $\mathrm{P}_{j|K}^o$ and $\mathrm{P}_{i|K}^o$ can be obtained   in the following corollary.
\begin{corollary}
By using the proposed mmWave-NOMA transmission scheme, the outage probability experienced by user $j$ conditioned on $K$ is given by
 \begin{align}\label{1 outage}
\mathrm{P}_{j|K}^o   &=   c_j\sum^{K-j}_{p=0}{K-j \choose p} (-1)^p \frac{F^{j+p}_{\pi(j)}(\eta_j)}{j+p},
\end{align}
if $\beta_i^2>\beta_j^2\epsilon_i$, otherwise $\mathrm{P}_{j|K}^o=1$,  where $\eta_j = \max\left\{
\frac{\frac{\epsilon_i}{\rho}}{\beta_i^2-\beta_j^2\epsilon_i}, \frac{\epsilon_j}{\rho \beta_j^2}\right\}$. The conditional  outage probability for user $i$ is given by
 \begin{align}\label{2 outage}
\mathrm{P}_{i|K}^o &=    c_i\sum^{K-i}_{p=0}{K-i \choose p} (-1)^p \frac{F^{i+p}_{\pi(j)}(\eta_i)}{i+p},
\end{align}
where $\eta_i =
\frac{\frac{\epsilon_i}{\rho}}{\beta_i^2-\beta_j^2\epsilon_i} $.
\end{corollary}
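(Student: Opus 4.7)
The plan is to reduce both outage events to simple threshold events on a single ordered channel gain, and then use Lemma 1 to evaluate the resulting CDF in closed form via a Beta-type integral.

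First I would rewrite the outage events in terms of $|\mathbf{h}_i^H\mathbf{p}|^2$ and $|\mathbf{h}_j^H\mathbf{p}|^2$ directly. For user $i$, the inequality $\text{SINR}_i<\epsilon_i$ is equivalent to
\begin{align*}
|\mathbf{h}_i^H\mathbf{p}|^2\bigl(\beta_i^2-\beta_j^2\epsilon_i\bigr)<\tfrac{\epsilon_i}{\rho}.
\end{align*}
Provided $\beta_i^2>\beta_j^2\epsilon_i$ (otherwise both outage probabilities collapse to $1$, since the condition cannot be met for any channel), this becomes $|\mathbf{h}_i^H\mathbf{p}|^2<\eta_i$ with $\eta_i$ as given. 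For user $j$, the SIC condition $\text{SINR}_{i\to j}\ge\epsilon_i$ likewise becomes $|\mathbf{h}_j^H\mathbf{p}|^2\ge\frac{\epsilon_i/\rho}{\beta_i^2-\beta_j^2\epsilon_i}$, while $\text{SINR}_j\ge\epsilon_j$ becomes $|\mathbf{h}_j^H\mathbf{p}|^2\ge\frac{\epsilon_j}{\rho\beta_j^2}$. Taking the maximum of the two lower bounds yields the single threshold $\eta_j$, so that \eqref{overall outage} simplifies to $\mathrm{P}_{j|K}^o=\mathrm{P}(|\mathbf{h}_j^H\mathbf{p}|^2<\eta_j\mid K)$.

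Next I would apply Lemma 1. Both conditional outage probabilities are CDFs of order statistics evaluated at the thresholds just derived. Writing them as integrals of the pdf given in Lemma 1 and substituting $u=F_{\pi(j)}(t)$ (so that $\frac{dF_{\pi(j)}}{dt}\,dt=du$) turns them into incomplete Beta integrals,
\begin{align*}
\mathrm{P}_{j|K}^o=c_j\int_0^{F_{\pi(j)}(\eta_j)}u^{j-1}(1-u)^{K-j}\,du,
\end{align*}
and similarly for $\mathrm{P}_{i|K}^o$ with $(j,\eta_j)$ replaced by $(i,\eta_i)$. Expanding $(1-u)^{K-j}$ by the binomial theorem and integrating term by term yields exactly the sums in \eqref{1 outage} and \eqref{2 outage}.

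There is no real obstacle here; the only subtlety is the bookkeeping step of combining the two constraints on $|\mathbf{h}_j^H\mathbf{p}|^2$ into the single threshold $\eta_j$, and recognizing that the condition $\beta_i^2>\beta_j^2\epsilon_i$ is exactly what is needed to keep both thresholds positive and finite. Once this is done, the order-statistic machinery of Lemma 1 and a standard binomial expansion deliver the closed-form expressions.
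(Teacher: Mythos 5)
Your proposal is correct and follows exactly the route the paper intends: the paper gives no explicit proof of the corollary beyond ``applying the above lemma and some algebraic manipulations,'' and your reduction of the SINR events to the single thresholds $\eta_i$ and $\eta_j$ (with $\beta_i^2>\beta_j^2\epsilon_i$ guaranteeing positivity), followed by integrating the order-statistic pdf of Lemma~1 via the substitution $u=F_{\pi(j)}(t)$ and a binomial expansion of $(1-u)^{K-j}$, is precisely that omitted manipulation. No gaps.
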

By using the above corollary  and substituting \eqref{k1}, \eqref{1 outage} and \eqref{2 outage} into \eqref{sum rate} and \eqref{sum rate1}, the sum rates achieved by mmWave-NOMA and mmWave-OMA can be calculated.
\subsection{Asymptotic Performance Analysis}
The obtained results shown in \eqref{1 outage} and \eqref{2 outage} are quite complicated, since they involve the calculation of double integrals. In order to obtain some insight, we will obtain    approximations to these expressions. Particularly, our asymptotic studies are carried out by using the following  two assumptions. One is that the central angle of the sector, $2\Delta$, is small, i.e., $\Delta \rightarrow 0$, and the other is the high SNR assumption. The use of these two assumptions leads to the following lemma.
 \begin{lemma}
 When $\Delta\rightarrow 0$ and at high SNR, the conditional outage probabilities $\mathrm{P}_{i|K}^o$ and $\mathrm{P}_{j|K}^o$ can be approximated as follows:
  \begin{align}
\mathrm{P}_{k|K}^o   &\approx    c_k   \frac{F^{k+p}_{\pi(j)}(\eta_k)}{k} ,
\end{align}
where $k\in\{i,j\}$. The diversity gain available at user $k$ is $k$.
 \end{lemma}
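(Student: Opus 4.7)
The plan is to work from the exact expressions in Corollary~1 and extract the dominant behavior in the two asymptotic regimes. Starting point: for $k\in\{i,j\}$,
\begin{align*}
\mathrm{P}_{k|K}^o = c_k\sum_{p=0}^{K-k}\binom{K-k}{p}(-1)^p\frac{F_{\pi(j)}^{k+p}(\eta_k)}{k+p}.
\end{align*}
First I would observe that $\eta_k$ is proportional to $1/\rho$, so $\eta_k\to 0$ as $\rho\to\infty$. Since the integrand defining $F_{\pi(j)}(y)$ has the form $1-e^{-y(1+r^\alpha)/F_M(\cdot)}$, which vanishes linearly in $y$ as $y\to 0$, we get $F_{\pi(j)}(\eta_k)\to 0$ at high SNR. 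In the binomial sum, the term $p=0$ contributes $F_{\pi(j)}^{k}(\eta_k)/k$, while every $p\geq 1$ term is of order $F_{\pi(j)}^{k+1}(\eta_k)$ or higher, and therefore negligible. This collapses the sum to the claimed leading term $c_k F_{\pi(j)}^{k}(\eta_k)/k$.

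Next I would use the $\Delta\to 0$ assumption to simplify $F_{\pi(j)}(\eta_k)$. For $\theta\in[\bar\theta-\Delta,\bar\theta+\Delta]$ with $\Delta\to 0$, one has $\pi[\bar\theta-\theta]\to 0$, and the Fej\'er kernel satisfies $F_M(x)\to M$ as $x\to 0$ (via $\sin(Mx/2)/\sin(x/2)\to M$). So inside the inner integral I can replace $F_M(\pi[\bar\theta-\theta])$ by the constant $M$. Together with $1-e^{-u}\approx u$ for small $u$, this yields a tractable first-order expansion
\begin{align*}
F_{\pi(j)}(\eta_k)\approx \frac{\eta_k}{M}\cdot\frac{\phi^{2}\int_0^{R_{\mathcal D}}(1+r^\alpha)e^{-\phi r}r\,dr}{\gamma(2,R_{\mathcal D}\phi)},
\end{align*}
which shows that $F_{\pi(j)}(\eta_k)$ is linear in $\eta_k$, i.e., proportional to $1/\rho$, with a proportionality constant that is finite and independent of $\rho$.

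Combining the two steps, $\mathrm{P}_{k|K}^o$ is proportional to $F_{\pi(j)}^{k}(\eta_k)\sim (1/\rho)^{k}$, which establishes a diversity order of exactly $k$ for the user placed at rank $k$ in the ordering \eqref{order}. The order statistics machinery behind the coefficient $c_k$ contributes only a constant factor and does not affect the SNR exponent.

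The main obstacle I anticipate is justifying rigorously that the $\Delta\to 0$ limit and the high-SNR limit can be taken so that the single leading term dominates uniformly: one must verify that the error from replacing $F_M(\pi[\bar\theta-\theta])$ by $M$ inside the exponent stays small on the entire sector, and that the higher-order binomial terms are genuinely subdominant (not just termwise, but after combining powers of $F_{\pi(j)}$). Both checks reduce to standard Taylor-expansion bounds of $1-e^{-u}=u+O(u^{2})$ and $F_M(x)=M+O(M^{3}x^{2})$ near $x=0$, but keeping the two small parameters $\Delta$ and $1/\rho$ consistently ordered is where care is needed.
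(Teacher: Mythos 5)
Your proposal follows essentially the same route as the paper's proof: collapse the binomial sum to its $p=0$ term since every $p\geq 1$ term carries an extra factor of $F_{\pi(j)}(\eta_k)\to 0$, linearize $1-e^{-u}\approx u$ at high SNR to show $F_{\pi(j)}(\eta_k)\dot\sim 1/\rho$, and simplify the Fej\'er kernel as $\Delta\to 0$; the only difference is that you replace $F_M(\pi[\bar{\theta}-\theta])$ by its zeroth-order value $M$ where the paper keeps the $O(M^2\Delta^2)$ correction, which affects only the constant (your constant indeed coincides with the paper's in the $\Delta\to 0$ limit, since $\phi^{2}\int_0^{R_{\mathcal D}}(1+r^\alpha)e^{-\phi r}r\,dr=\gamma(2,R_{\mathcal D}\phi)+\phi^{-\alpha}\gamma(\alpha+2,R_{\mathcal D}\phi)$) and not the SNR exponent. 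The argument and the resulting diversity order $k$ are correct and match the paper.
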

 \begin{proof}
In order to use the assumption $\Delta \rightarrow 0$,  recall that the Fej\'er kernel can  be  written as follows:
\begin{align}
  F_M\left(\pi[\bar{\theta} - \theta]\right)=
\frac{ \sin^2\left(\frac{\pi M (\bar{\theta} - \theta)}{2}\right) }{{M \sin^2\left(\frac{\pi  (\bar{\theta} - \theta)}{2}\right)}}.
\end{align}
Note that  $|\bar{\theta} - \theta|\leq\Delta$.  When $\Delta$ is small, the Fej\'er kernel can be approximated as follows:
\begin{align}\label{approximtion x}
F_M\left(\pi[\bar{\theta} - \theta]\right)
&\approx M  \text{sinc}^2\left(\frac{\pi M (\bar{\theta} - \theta)}{2}\right) \\\nonumber &\approx  M  \left(1-\frac{\pi^2 M^2 (\bar{\theta} - \theta)^2}{12}\right) , 
\end{align}
where   the first approximation  follows from $\sin(x)\approx x$ for $x\rightarrow 0$, and the second approximation is due to the two following facts: $\text{sinc}(x)\approx 1-\frac{x^2}{6}$ and $(1-x)^2\approx 1-2x$, for $x\rightarrow   0$ \cite{Gaussiandd}.

Therefore, the CDF of an unordered channel gain can be approximated as follows:
\begin{align}
F_{{\pi(j)}}(y)&\approx   \int^{\bar{\theta}+\Delta}_{\bar{\theta}-\Delta}\int^{R_{\mathcal{D}}}_{0}\frac{\lambda\phi^{2} e^{-\phi r}}{2\Delta \lambda\gamma(2, R_{\mathcal{D}}\phi )} \\\nonumber &\times \left(1- e^{-
   \frac{{y(1+ r^{\alpha})  }}{   M  \left(1-\frac{\pi^2 M^2 (\bar{\theta} - \theta)^2}{12}\right)  }
  }\right)rdrd\theta
  \\ \nonumber
  &\approx   \int^{\bar{\theta}+\Delta}_{\bar{\theta}-\Delta}\int^{R_{\mathcal{D}}}_{0}\frac{\lambda\phi^{2} e^{-\phi r}}{2\Delta \lambda\gamma(2, R_{\mathcal{D}}\phi )} \\\nonumber &\times  \left(1- e^{-
   \frac{{y(1+ r^{\alpha})  }}{   M    }\left(1+\frac{\pi^2 M^2 (\bar{\theta} - \theta)^2}{12}\right)
  }\right)rdrd\theta,
\end{align}
where the last approximation follows from $(1-x)^{-1}\approx 1+x$, for $x\rightarrow   0$.

After applying the assumption that $\Delta\rightarrow  0$, we will further apply the high SNR approximation. Note that at high SNR, both $\eta_i$ and $\eta_j$ go to zero, which means
\begin{align}
F_{{\pi(j)}}(\eta_i)
  &\approx   \int^{\bar{\theta}+\Delta}_{\bar{\theta}-\Delta}\int^{R_{\mathcal{D}}}_{0}\frac{\lambda\phi^{2} e^{-\phi r}}{2\Delta \lambda\gamma(2, R_{\mathcal{D}}\phi )}  \left(
   \frac{{\eta_i(1+ r^{\alpha})  }}{   M    }\right. \\\nonumber &\times \left.\left(1+\frac{\pi^2 M^2 (\bar{\theta} - \theta)^2}{12}\right)
  \right)rdrd\theta.
\end{align}
After  some algebraic manipulations, the CDF for an unordered channel gain can be approximated as follows:
\begin{align}
F_{{\pi(j)}}(\eta_i)
  &\approx \frac{\eta_i  }{2M   \gamma(2, R_{\mathcal{D}}\phi )}  \left(2+\frac{\pi^2 M^2 \Delta^2}{18}\right)  \\\nonumber &\times  \left( \gamma(2,R_{\mathcal{D}}\phi)+ \phi^{-\alpha}\gamma(\alpha+2,R_{\mathcal{D}}\phi)\right)\dot\sim \frac{1}{\rho},
  \end{align}
  where $f(\rho)\dot\sim \frac{1}{\rho^x}$ when $\underset{\rho\rightarrow \infty}{\lim}\frac{\log (f(\rho))}{\log \rho}=-x$ \cite{Zhengl03}.

By using the above approximations, the outage probability at user $i$   can be approximated as follows:
 \begin{align}
\mathrm{P}_{i|K}^o &=    c_i\sum^{K-i}_{p=0}{K-i \choose p} (-1)^p \frac{F^{i+p}_{\pi(j)}(\eta_i)}{i+p}
 \\\nonumber &\approx    c_i   \frac{F^{i+p}_{\pi(j)}(\eta_i)}{i}\dot\sim \frac{1}{\rho^{i}},
\end{align}
which means that the diversity gain at user $i$ is $i$. The results for user $j$ can be obtained similarly,  and the proof is complete.
\end{proof}
{\it Remark:} Note that an implication of having  a small $\Delta$ is that the area of the sector becomes so small that there might be no user in it. But in many practical scenarios, such as in a sport stadium or a conference hall, the users are so densely deployed that it is always possible to find multiple users located in a sector even with  a small $\Delta$.
\section{Random Beamforming with Limited Feedback}
In the previous section, it is assumed that the base station has  perfect knowledge of the users' effective channel gains. However, for a fast time varying situation, this assumption might not be realistic, since  the phases of the channel vectors and their fading coefficients, $\theta_k$ and $a_k$, are changing rapidly. In this section, we investigate two   random beamforming  schemes with low system overhead.

 \subsection{With the Distance Information Available at the Base Station}\label{subsection ditance} Compared to the phases and fading coefficients of the channels, the users' distance information will change  relatively slowly, which means that it is more realistic for the base station   to have access to the users' distance information only. Therefore, in this subsection, we investigate the impact of this partial CSI on the performance of mmWave-NOMA.

Again assume  that only the users that fall into the sector $\mathcal{D}_{\theta}$ will participate in the NOMA transmission. Assume that there are $K$ users in this sector.
Since the users' distances are known,  the base station will order the users according to the following criterion:
\begin{align}
d_{ 1}\leq \cdots \leq d_{ K},
\end{align}
instead of using the effective channel gains which are not known to the base station. Similarly to the previous section, we schedule user $i$ and user $j$ for the NOMA transmission to act as the weak and strong users, respectively.   Since a user with  a shorter distance has a stronger channel condition, we take  $i>j$.

Note that the density functions of the ordered distances have been found in \cite{1512427} when the users are distributed randomly in a ball.  The shape of the addressed area is a sector, but the steps provided in  \cite{1512427} are still applicable, as shown in the following. Particularly,  the CDF of $d_k$ can be calculated from the probability of the event that there are less than $k$ users inside a sector with radius $r$, i.e.,
\begin{align}
F_{d_k}(r) &= 1 - \sum^{k-1}_{i=0}\mathrm{P}(E_i)\\\nonumber &=1 - \sum^{k-1}_{i=0}e^{- \mu_{\Phi_2}(\mathcal{A}(r))}\frac{\left( \mu_{\Phi_2}(\mathcal{A}(r))^i
\right)}{i!},
\end{align}
where  $\mathcal{A}(r)$ denotes  a sector with radius of $r$, and $E_i$ denotes the event that there are $i$ users in  $\mathcal{A}(r)$.
Following steps similar to those for obtaining \eqref{mu}, the factor $\mu_{\Phi_2}(\mathcal{A}(r))$ can be found as follows:
\begin{align}
\mu_{\Phi_2}(\mathcal{A}(r))=2\Delta \lambda \phi^{-2}\gamma(2, r\phi ).
\end{align}
Substituting the expression for  $\mu_{\Phi_2}(\mathcal{A}(r))$  into the CDF expression, the CDF of $d_k$ can be expressed as follows:
\begin{align}
F_{d_k}(r)
&=1 - \sum^{k-1}_{i=0}e^{-2 \Delta \lambda \phi^{-2}\gamma(2, r\phi )}  \frac{\left(2 \Delta \lambda \phi^{-2}\gamma(2, r\phi )\right)^i}{i!}.
\end{align}
As a result, the corresponding pdf for the $k$-th smallest distance can be found as follows:
\begin{align}\label{pdf ppp}
f_{d_k}(r)
&=   2 \Delta \lambda e^{-r\phi} r e^{-2 \Delta \lambda \phi^{-2}\gamma(2, r\phi )}    \frac{\left(2 \Delta \lambda \phi^{-2}\gamma(2, r\phi )\right)^{k-1}}{(k-1)!},
\end{align}
where we have used the fact that
\[
\frac{d\gamma(2, r\phi )}{dr} = e^{-r\phi} r\phi^2.
\]
The difference between the above pdf expression and the one in \cite{1512427} is due to the   facts that the area for the addressed problem is not a ball and the addressed density is   a function of $r$.

On the other hand, note that the angle of user $k$'s channel vector is independent of its distance, and it is uniformly distributed between $(\bar{\theta}-\Delta)$ and $(\bar{\theta}+\Delta)$.
Therefore the CDF of user $k$'s channel gain can be obtained as follows:
\begin{align}\label{fx}
F_{k}(y)
  &=  \underset{\mathcal{D}_{\theta}}{\int}\left(1- e^{-
   \frac{{y(1+ r(x)^{\alpha})  }}{  F_M\left(\pi[\bar{\theta} - \theta_{\pi(j)}]\right) }
  }\right)p_{X_{\pi(j)}}(x)dx\\\nonumber &=
 \int^{\bar{\theta}+\Delta}_{\bar{\theta}-\Delta}\int^{R_{\mathcal{D}}}_0\left(1- e^{-
   \frac{{y(1+ r^{\alpha})  }}{  F_M\left(\pi[\bar{\theta} - \theta]\right) }
  }\right)\frac{f_{d_k}(r)}{2\Delta} drd\theta.
\end{align}

It is important to point out that the above CDF is valid only if we can find the $k$-th nearest node. Or in other words, if there is no boundary to $\mathcal{D}_{\theta}$ and the nodes are spread  throughout of the plane, the above CDF can be applied. For the addressed scenario, the users are confined in $\mathcal{D}_{\theta}$, i.e., $r\leq R_{\mathcal{D}}$, which means that it is possible that the $k$-th nearest node does not exist, i.e., there are fewer than $(k-1)$ nodes in $\mathcal{D}_{\theta}$. By using the result in \eqref{fx} and also considering the possible choices  for the number of users in $\mathcal{D}_{\theta}$,  we can obtain  the following lemma for the outage probability  and the sum rate.

 \begin{lemma} \label{lemmax1}
When only the users' distance information is available,  the outage probability for the $k$-th nearest node can  be written    as follows:
\begin{align}\label{Fxx}
F^o_{k} = \sum^{k-1}_{n=0}\mathrm{P}(K=n) + \left(1-\sum^{k-1}_{n=0}\mathrm{P}(K=n)\right) F_{k}(\eta_k),
\end{align}
where $k\in\{i,j\}$. Moreover, the outage sum rate can be shown as follows:
 \begin{align}\label{sum distance}
 R^{NOMA}_{sum} &=  (1-F_{j}(\eta_j))R_j+ (1-F_{i}(\eta_i))R_i,
 \end{align}
where the $k$-th nearest user has a targeted data rate of $R_k$.
\end{lemma}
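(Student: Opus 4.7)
The plan is to prove the two parts separately, with the entire argument hinging on a single conditioning step: the law of total probability applied to the Poisson random variable $K$ that counts the number of users inside the sector $\mathcal{D}_{\theta}$.

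For the outage probability formula in \eqref{Fxx}, I would begin by observing that the $k$-th nearest user is well-defined only when $K\geq k$. I would therefore partition the outage event into two disjoint pieces. On the event $\{K\leq k-1\}$, the $k$-th nearest user does not exist and an outage is declared with probability one, contributing $\sum_{n=0}^{k-1}\mathrm{P}(K=n)$, with the Poisson probabilities already computed in \eqref{k1}. On the complementary event $\{K\geq k\}$, the $k$-th nearest user exists and its effective channel gain has CDF $F_{k}(y)$ from \eqref{fx}; by exactly the same SINR manipulation used in Corollary~1 (which reduces each user's decoding failure to $|\mathbf{h}_{k}^H\mathbf{p}|^2<\eta_k$), the conditional outage probability is $F_{k}(\eta_k)$. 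Adding the two contributions, weighted by the probabilities of the two conditioning events, gives \eqref{Fxx}.

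For the sum rate formula \eqref{sum distance}, I would assign each of the two scheduled users its ergodic rate $(1-F^{o}_{k})R_k$ and sum over $k\in\{i,j\}$. The factors $(1-\sum_{n=0}^{k-1}\mathrm{P}(K=n))$ are absorbed in an obvious way (and would equal one asymptotically in the dense regime $\mu_{\Phi_2}(\mathcal{D}_\theta)\gg 1$ where the $i$-th nearest user almost surely exists), so that the per-user contribution reduces to $(1-F_{k}(\eta_k))R_k$ in the compact form stated in the lemma.

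The main, and essentially only, technical obstacle is the existence bookkeeping forced by the \emph{finite} size of $\mathcal{D}_{\theta}$: in the infinite-plane PPP setting of \cite{1512427}, the $k$-th nearest point always exists, so the correction term $\sum_{n=0}^{k-1}\mathrm{P}(K=n)$ is absent. Once this effect is accounted for through the thinned PPP mean measure $\mu_{\Phi_2}(\mathcal{D}_\theta)$ given in \eqref{mu}, the rest of the derivation is a direct application of the pdf $f_{d_k}(r)$ in \eqref{pdf ppp}, the CDF $F_k(y)$ in \eqref{fx}, and the threshold $\eta_k$ already identified in Corollary~1.
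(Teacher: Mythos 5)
Your argument for \eqref{Fxx} is exactly the paper's (the paper gives no formal proof; it relies on the paragraph preceding the lemma, which notes that the CDF built from $f_{d_k}$ is valid only when the $k$-th nearest node exists, and then splits on the event $\{K\le k-1\}$ versus its complement). One caveat you gloss over, which the paper itself flags in its Remark~2: strictly speaking, conditioning on $\{K\ge k\}$ turns the Poisson process into a Bernoulli one, so $F_k(\eta_k)$ from \eqref{fx} is not exactly the conditional outage probability given existence; the paper knowingly uses the unconditioned PPP-based density as the tractable surrogate, and your proof should acknowledge that this step is a modeling choice rather than an identity.

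The sum-rate part is where your write-up does not match the paper and is not correct as a derivation. You start from $(1-F^o_k)R_k$ and claim the existence factors are ``absorbed in an obvious way,'' but $(1-F^o_k)=\left(1-\sum_{n=0}^{k-1}\mathrm{P}(K=n)\right)\left(1-F_k(\eta_k)\right)$, which differs from the per-user term $(1-F_k(\eta_k))R_k$ in \eqref{sum distance} by that multiplicative factor; the two expressions do not coincide, asymptotically or otherwise, unless $\mathrm{P}(K<k)=0$. The paper does not derive \eqref{sum distance} from \eqref{Fxx} at all: its Remark~1 presents \eqref{sum distance} as a \emph{definition} of the achieved sum rate under a stated transmission convention for the trivial cases where the $i$-th or $j$-th nearest user does not exist, cases which occur with low probability in a dense deployment. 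So you should replace the ``absorption'' step with an explicit statement of that convention; as written, that step would fail.
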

{\it Remark 1:} It is important to point out that the sum rate in \eqref{sum distance} means that no transmission will take place if  the $i$-th nearest user cannot be found in $\mathcal{D}_{\theta}$, and the NOMA transmission is adopted even if the $j$-th nearest user can be found but the $i$-th one cannot. Note that other transmission strategies can also be used for these trivial  cases which happen with low probabilities in a densely deployed  network.

{\it Remark 2:} Note that one can also use a CDF expression conditioned on $K$ to find the outage probability, but this is difficult to evaluate since the conditioning on  $K$ converts the Poisson point process to a Bernoulli one to which the result in \eqref{pdf ppp} is not applicable.

\subsubsection*{Asymptotic performance analysis}
While the  expressions for the outage probability and the sum rate in Lemma \ref{lemmax1} can be   calculated numerically, approximations are still desirable  in order to obtain greater insight. Following steps similar to those in the previous section, i.e.,  when $\Delta$ approaches zero, the Fej\'er kernel can be simplified,  which yields the following approximation:
\begin{align}
F_{k}(y)  &\approx
 \int^{\bar{\theta}+\Delta}_{\bar{\theta}-\Delta}\int^{R_{\mathcal{D}}}_0\left(1- e^{-
   \frac{{y(1+ r^{\alpha})  }}{M  \left(1-\frac{\pi^2 M^2 (\bar{\theta} - \theta)^2}{12}\right) }
  }\right)  \\\nonumber &\times  f_{d_k}(r) dr\frac{1}{2\Delta}d\theta.
\end{align}
Furthermore  notice that both $\eta_i$ and $\eta_j$ approach zero at high SNR, which yields the following approximation:
\begin{align}\nonumber
F_{i}(\eta_i)  \approx&
 \int^{\bar{\theta}+\Delta}_{\bar{\theta}-\Delta}\int^{R_{\mathcal{D}}}_0\left(
   \frac{{\eta_i(1+ r^{\alpha})  }}{M  \left(1-\frac{\pi^2 M^2 (\bar{\theta} - \theta)^2}{12}\right) }
  \right)\frac{ f_{d_i}(r)}{2\Delta} drd\theta\\\nonumber \approx&
 \int^{R_{\mathcal{D}}}_0\left(
   \frac{{\eta_i(1+ r^{\alpha})  }}{M   }
  \right)f_{d_i}(r) dr  \frac{ 1}{2\Delta} \\ &\times \int^{\bar{\theta}+\Delta}_{\bar{\theta}-\Delta}\left(1+\frac{\pi^2 M^2 (\bar{\theta} - \theta)^2}{12}\right)d\theta,
\end{align}
and $F_{j}(\eta_j)$ can be obtained similarly.   The integral over $\theta$ can be obtained  by following steps similar to those in the previous section. In addition, define the integral  over  $r$, a factor not related to the transmit SNR, as follows:
\begin{align}
Q_{1,i}\triangleq &\int^{R_{\mathcal{D}}}_0\left(
   \frac{{ (1+ r^{\alpha})  }}{M   }
  \right)  \lambda e^{-r\phi} r e^{-2 \Delta \lambda \phi^{-2}\gamma(2, r\phi )} \\\nonumber &\times    \frac{\left(2 \Delta \lambda \phi^{-2}\gamma(2, r\phi )\right)^{i-1}}{(i-1)!} dr.
\end{align}

Therefore the outage probability can be approximated as follows:
\begin{align}\nonumber
F_{i}(\eta_i)   &\approx   \left(1+\frac{\pi^2 M^2 \Delta^2}{36}\right)Q_{1,i}\eta_i ,
\end{align}
which demonstrates that the use of distance information only yields a  diversity gain of one for all the users. This is expected since the base station has access to partial CSI only and the dynamics of  the   fading  gains cannot be used.

\subsection{With One-Bit Feedback}\label{subsection 2}
In the case in which the number of users in the sector $\mathcal{D}_{\theta}$ is very large, feeding these users' effective channel gains or distances back to the base station can still be very demanding. As an alternative, asking each user to feed only one bit about its channel quality  back to the base station can substantially  reduce the system overhead.

In particular,  the base station will first set a threshold, $\xi$, $\xi>0$,  which will be broadcast to all the users prior to the downlink transmission.
Each user in the sector will compare its effective channel gain with  $\xi $ and send $1$ to the base station if its channel gain is larger than $\xi$, otherwise it will send $0$ to the base station. As a result, the users in the sector will be divided into two sets, denoted by $\mathcal{S}_1$ and $\mathcal{S}_2$, respectively. Particularly, the users in $\mathcal{S}_2$ are the ones which feed $1$ back to the base station, i.e.,
\begin{align}
\mathcal{S}_2 \triangleq \{i| x_i\in \mathcal{D}_{\theta}, |\mathbf{h}_{i}^H   \mathbf{p}|^2>\xi\},
\end{align}
and $\mathcal{S}_1$ is defined similarly by grouping those users whose feedbacks are $0$.

When there is more than one user in $\mathcal{D}_{\theta}$, i.e., $K\geq 2$, and $|\mathcal{S}_n|\neq 0$, $n\in\{1,2\}$, the base station will randomly select one user from $\mathcal{S}_1$ to be paired with another user randomly selected from $\mathcal{S}_2$. If all the $K$ nodes are in one group, the base station will randomly select two users from this group for the implementation of NOMA. If there is only one user in the sector, i.e., $K=1$, this user will be served solely by the base station.   No user will be served if both sets are empty, which happens only if $K=0$. In the following, we will focus on the case with $K\geq 2$.

The following lemma provides the outage probabilities for the users selected to act as the NOMA strong and weak users, respectively.
\begin{lemma}
Suppose that there are  $K\geq 2$ users in $\mathcal{D}_{\theta}$. When each user only feeds one bit back to the base station using the above protocol, the outage probability for the user selected to act as the weak user is given by
\begin{align}
\mathrm{P}_{\mathcal{S}_1}^o =&
  F_{\mathcal{S}_1|K}(\tilde{\eta}_1)
 \sum^{K}_{n=1}\mathrm{P}(|\mathcal{S}_1|=n)  +\mathrm{P}(|\mathcal{S}_1|=0)   F_{\mathcal{S}_2|K}\left(\tilde{\eta}_1\right),
\end{align}
and the outage probability for the user selected to act as the strong user is given by
\begin{align}
\mathrm{P}_{\mathcal{S}_2}^o =&
  F_{\mathcal{S}_2|K}(\tilde{\eta}_2)
 \sum^{K}_{n=1}\mathrm{P}(|\mathcal{S}_2|=n)  +\mathrm{P}(|\mathcal{S}_2|=0)    F_{\mathcal{S}_1|K}\left(\tilde{\eta}_2\right),
\end{align}
where
$\mathrm{P}(|\mathcal{S}_2|=n)  = {K \choose n} \left(F_{{\pi(j)}}(\xi)\right)^{K-n} \left(1-F_{{\pi(j)}}(\xi)\right)^n$, $
\mathrm{P}(|\mathcal{S}_1|=n)  = \mathrm{P}(|\mathcal{S}_2|=K-n)$,  $F_{\mathcal{S}_1|K}(y) =    \frac{ F_{{\pi(j)}}(\min\{y,\xi\})}{  F_{{\pi(j)}}(\xi) }$,   $F_{\mathcal{S}_2|K}(y)
  =  \max\left\{0, \frac{F_{{\pi(j)}}(y) -F_{{\pi(j)}}(\xi) }{1-F_{{\pi(j)}}(\xi) }\right\}$,
$\tilde{\eta}_2 = \max\left\{
\frac{\frac{\tilde{\epsilon}_1}{\rho}}{\beta_1^2-\beta_2^2\tilde{\epsilon}_1}, \frac{\tilde{\epsilon}_2}{\rho \beta_2^2}\right\}$,   $\tilde{\eta}_1 =
\frac{\frac{\tilde{\epsilon}_1}{\rho}}{\beta_1^2-\beta_2^2\tilde{\epsilon}_1}$,  $\tilde{\epsilon}_k=2^{\tilde{R}_k}-1$, for $k\in\{1,2\}$, and $\tilde{R}_1$ and $\tilde{R}_2$ denote the targeted rates for the users selected to act as the weak and strong  users, respectively.
\end{lemma}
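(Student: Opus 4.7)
The plan is to decompose the outage event by conditioning on the cardinality of $\mathcal{S}_1$ (or $\mathcal{S}_2$), then separately identify the conditional distribution of the effective channel gain of the user that is actually scheduled. Throughout, the $K$ effective channel gains in $\mathcal{D}_\theta$ are i.i.d.\ with common CDF $F_{\pi(j)}$, as established when the lemma on the ordered statistic was proved; this independence is the workhorse of the argument.

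First, I would derive the distribution of $|\mathcal{S}_2|$. Since each user independently satisfies $|\mathbf{h}_i^H\mathbf{p}|^2>\xi$ with probability $1-F_{\pi(j)}(\xi)$, the random variable $|\mathcal{S}_2|$ is ${\rm Binomial}(K,1-F_{\pi(j)}(\xi))$, which immediately gives the stated expression for $\mathrm{P}(|\mathcal{S}_2|=n)$; the formula for $\mathrm{P}(|\mathcal{S}_1|=n)$ then follows from $|\mathcal{S}_1|=K-|\mathcal{S}_2|$. Next, I would determine the distribution of the channel gain of a user drawn uniformly at random from $\mathcal{S}_1$, conditional on $\mathcal{S}_1\neq \emptyset$. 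Because the gains are i.i.d., this is just the unordered distribution conditioned on $\{|\mathbf{h}|^2\le \xi\}$, giving
\begin{align*}
F_{\mathcal{S}_1|K}(y)=\mathrm{P}\!\left(|\mathbf{h}|^2\le y \,\big|\, |\mathbf{h}|^2\le \xi\right)=\frac{F_{\pi(j)}(\min\{y,\xi\})}{F_{\pi(j)}(\xi)},
\end{align*}
where the $\min$ handles $y>\xi$ and shows the conditional distribution is independent of the particular value of $|\mathcal{S}_1|\ge 1$. An analogous conditioning argument on $\{|\mathbf{h}|^2>\xi\}$ yields the truncated expression for $F_{\mathcal{S}_2|K}(y)$, with the $\max\{0,\cdot\}$ taking care of $y\le \xi$.

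The third step is to translate the SINR outage events into channel-gain events. For the user acting as the weak (NOMA-$\beta_1$) user, the condition ${\rm SINR}<\tilde{\epsilon}_1$ when treating the partner's signal as interference reduces (under $\beta_1^2>\beta_2^2\tilde{\epsilon}_1$) to $|\mathbf{h}|^2<\tilde{\eta}_1$, exactly as in the earlier corollary. For the strong user, combining the SIC-decoding condition on the partner's message with the own-message decoding condition yields $|\mathbf{h}|^2<\tilde{\eta}_2$ with the stated maximum. Applying the law of total probability with respect to whether $\mathcal{S}_1$ is empty gives
\begin{align*}
\mathrm{P}_{\mathcal{S}_1}^{o}
=\mathrm{P}(|\mathcal{S}_1|\ge 1)\,F_{\mathcal{S}_1|K}(\tilde{\eta}_1)+\mathrm{P}(|\mathcal{S}_1|=0)\,F_{\mathcal{S}_2|K}(\tilde{\eta}_1),
\end{align*}
where the second term covers the fallback rule: when $\mathcal{S}_1$ is empty both scheduled users are drawn from $\mathcal{S}_2$, so the weak user's gain has CDF $F_{\mathcal{S}_2|K}$. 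Rewriting $\mathrm{P}(|\mathcal{S}_1|\ge 1)=\sum_{n=1}^{K}\mathrm{P}(|\mathcal{S}_1|=n)$ reproduces the first claimed formula, and the same argument with the roles of $\mathcal{S}_1$ and $\mathcal{S}_2$ swapped (together with $\tilde{\eta}_1\to \tilde{\eta}_2$) gives the formula for $\mathrm{P}_{\mathcal{S}_2}^{o}$.

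The main obstacle is the bookkeeping around the two fallback cases: one has to justify that, conditional on a set being non-empty, the scheduled user's gain truly has the truncated CDF and is independent of the cardinality, and one has to verify that when the ``wrong'' set is empty the fallback draw from the other set produces the stated $F_{\mathcal{S}_2|K}(\tilde{\eta}_1)$ (or $F_{\mathcal{S}_1|K}(\tilde{\eta}_2)$) term rather than something more intricate involving order statistics. Both facts rest on the i.i.d.\ property of the unordered gains together with the uniform-at-random selection rule, so once that independence is spelled out the remaining manipulations are routine.
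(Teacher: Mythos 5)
Your proposal is correct and takes essentially the same route as the paper's proof: conditioning on whether the relevant feedback set is empty, deriving the binomial law of $|\mathcal{S}_2|$ from the i.i.d.\ gains, and identifying the scheduled user's gain with the truncated CDFs $F_{\mathcal{S}_1|K}$ and $F_{\mathcal{S}_2|K}$. If anything, you are slightly more explicit than the paper about why the uniform-at-random selection from a non-empty set yields the truncated distribution independently of its cardinality, and about the reduction of the SINR outage events to the thresholds $\tilde{\eta}_1$ and $\tilde{\eta}_2$.
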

\begin{proof}
Since  there are $K\geq 2$ users in the sector, the outage probability experienced by the user chosen to act as the strong user in NOMA   can be expressed as follows:
\begin{align}\label{F5}
\mathrm{P}_{\mathcal{S}_2}^o =&
  F_{\mathcal{S}_2|K}(\tilde{\eta}_2)
  \mathrm{P}(|\mathcal{S}_2|\text{ is not empty})  \\\nonumber & +\mathrm{P}(|\mathcal{S}_2|\text{ is empty})  F_{\mathcal{S}_1|K}\left(\tilde{\eta}_2\right),
\end{align}  where $F_{\mathcal{S}_2|K}(\cdot)$ denotes the CDF of the   effective channel gain of a user randomly selected  from $\mathcal{S}_2$ and its expression will be evaluated later. The probability $ \mathrm{P}(|\mathcal{S}_2|\text{ is not empty})$ is equivalent to $\sum^{K}_{n=1}\mathrm{P}(|\mathcal{S}_1|=n)$.   Note that $F_{\mathcal{S}_1|K}\left(\tilde{\eta}_2\right)$, the CDF of the weak user's channel,  is used for the case of $|\mathcal{S}_2|=0$ since the base station will select one user randomly from $\mathcal{S}_1$ to act as the strong user with the targeted rate of $\tilde{R}_2$ for the NOMA transmission. Similarly, the outage probability experienced by the user selected to act as the weak user can be expressed as follows:
\begin{align}\label{F4}
\mathrm{P}_{\mathcal{S}_1}^o =&
  F_{\mathcal{S}_1|K}(\tilde{\eta}_1)
 \sum^{K}_{n=1}\mathrm{P}(|\mathcal{S}_1|\text{ is not empty})  \\\nonumber & +\mathrm{P}(|\mathcal{S}_1|\text{ is   empty})   F_{\mathcal{S}_2|K}\left(\tilde{\eta}_1\right),
\end{align}
where   the variables  are defined similarly to their counterparts in \eqref{F5}.

Given  that there are $K$ users in the sector, the probability for the case of $|\mathrm{S}_2|=n$ can be obtained as shown in the lemma,
which is due to the fact that all the users' channels are independent and identically distributed.

The CDF of the effective channel gain of a user randomly selected from $\mathcal{S}_1$ can be expressed as follows:
\begin{align}
F_{\mathcal{S}_1|K}(y) &=   \mathrm{P}\left(|\mathbf{h}_{j}^H   \mathbf{p}|^2<y~|~ X_{j}\in\mathcal{D}_{\theta}, j\in\mathcal{S}_1\right) \\
 &=   \nonumber  \frac{\underset{\mathcal{D}_{\theta}}{\int}\mathrm{P}\left(|\mathbf{h}_{j}^H   \mathbf{p}|^2<\min\{y,\xi\}~|~ X_{j}=x_{j}\right)dP_{X_{j}}}{\underset{\mathcal{D}_{\theta}}{\int}\mathrm{P}\left(|\mathbf{h}_{j}^H   \mathbf{p}|^2<\xi~|~ X_{j}=x_{j}\right)dP_{X_{j}}}.
\end{align}
Following steps similar to those in Section \ref{section 2}, the addressed CDF can be obtained as follows:
\begin{align}\label{F1}
F_{\mathcal{S}_1|K}(y) =&    \frac{ F_{{\pi(j)}}(\min\{y,\xi\})}{  F_{{\pi(j)}}(\xi) },
\end{align}
for $\xi> 0$.

On the other hand, the CDF of the effective channel gain of a  user randomly selected from $\mathcal{S}_2$ can be expressed as follows:
\begin{align}
F_{\mathcal{S}_2|K}(y) &=   \mathrm{P}\left(|\mathbf{h}_{j}^H   \mathbf{p}|^2<y~|~ X_{j}\in \mathcal{D}_{\theta}, j\in\mathcal{S}_2\right) \\\nonumber
  &=    \frac{\underset{\mathcal{D}_{\theta}}{\int}\mathrm{P}\left(\xi<|\mathbf{h}_{j}^H   \mathbf{p}|^2<y~|~ X_{j}=x_{j}\right)dP_{X_{j}}}{\underset{\mathcal{D}_{\theta}}{\int}\mathrm{P}\left(|\mathbf{h}_{j}^H   \mathbf{p}|^2>\xi~|~ X_{j}=x_{j}\right)dP_{X_{j}}},
\end{align}
if $y>\xi$, otherwise $F_{\mathcal{S}_2|K}(y)=0$.
Again following steps similar to those in Section \ref{section 2}, this CDF can be found as follows:
\begin{align}\label{F2}
F_{\mathcal{S}_2|K}(y)
  &=   \frac{F_{{\pi(j)}}(y) -F_{{\pi(j)}}(\xi) }{1-F_{{\pi(j)}}(\xi) },
\end{align}
 if $\xi<y$, otherwise $F_{\mathcal{S}_2|K}(y)=0$. Substituting \eqref{F1} and \eqref{F2} into \eqref{F4} and \eqref{F5}, the outage probabilities in the lemma can be obtained and the proof is complete.
\end{proof}

By using the outage probabilities obtained in the above lemma, one can easily find an expression for the outage sum rate, which is omitted here due to space limitations.

Obviously the choice of $\xi$ will have a significant  impact on  the performance of the addressed one-bit feedback scenario. To    investigate this impact,  we will first study the impact of $\xi$ on the CDFs, $F_{\mathcal{S}_1|K}(\tilde{\eta}_1)$ and $F_{\mathcal{S}_2|K}(\tilde{\eta}_2)$.
\subsubsection{The impact of the threshold  on $F_{\mathcal{S}_k|K}(\tilde{\eta}_k)$}

Because   $\tilde{\eta}_1$ approaches zero at high SNR, $\min\{\tilde{\eta}_1,\xi\}$ will also approach zero at high SNR, with a  rate of decaying no smaller than $\tilde{\eta}_1$.
Note that the outage probability of the user selected to act as the NOMA weak user  is related to $F_{\mathcal{S}_1|K}(\tilde{\eta}_1)$, which can be approximated at high SNR as follows:
\begin{align}
F_{\mathcal{S}_1|K}(\tilde{\eta}_1) =& \frac{ F_{{\pi(j)}}(\min\{\tilde{\eta}_1,\xi\})}{  F_{{\pi(j)}}(\xi) }\\\nonumber
\approx&
\frac{\min\{\tilde{\eta}_1,\xi\} }{2M   \gamma(2, R_{\mathcal{D}}\phi ) F_{{\pi(j)}}(\xi) }  \left(2+\frac{\pi^2 M^2 \Delta^2}{18}\right)\\\nonumber &\times
    \left( \gamma(2,R_{\mathcal{D}}\phi)+ \phi^{-\alpha}\gamma(\alpha+2,R_{\mathcal{D}}\phi)\right).
\end{align}
In this paper, we are interested in the      following two choices of $\xi$.
\begin{itemize}
\item If $\xi$ is a constant and not a function of the transmit SNR, $\rho$,  the following holds at high SNR:
\begin{align}\label{xxx1}
F_{\mathcal{S}_1|K}(\tilde{\eta}_1) \dot\sim \frac{1}{\rho}.
\end{align}

\item If $\xi$   decreases at a  rate of $\frac{1}{\rho^x}$, i.e., $\xi\dot\sim \frac{1}{\rho^x}$, $x>0$,  we have the following approximation:
\begin{align}\label{xxx2}
F_{\mathcal{S}_1|K}(\tilde{\eta}_1) &\approx  \frac{  \min\{\tilde{\eta}_1,\xi\}}{   \xi }.
\end{align}

\end{itemize}
%
%
On the other hand, the impact of $\xi$ on  $F_{\mathcal{S}_2|K}(\tilde{\eta}_2)$ can be demonstrated as follows.
\begin{itemize}
\item If $\xi$ is a constant, $F_{\mathcal{S}_2|K}(\tilde{\eta}_2)=0$ at high SNR,  since $\tilde{\eta}_2$ approaches zero at high SNR and hence
\[
\mathrm{P}\left(\xi<|\mathbf{h}_{j}^H   \mathbf{p}|^2<\tilde{\eta}_2~|~ X_{j}=x_{j}\right)=0.
\]

\item If $\xi\dot\sim \frac{1}{\rho^x}$, $x>0$, we  have the following approximation:
\begin{align}
F_{\mathcal{S}_2|K}(\tilde{\eta}_2) =&   \frac{F_{{\pi(j)}}(\tilde{\eta}_2) -F_{{\pi(j)}}(\xi) }{1-F_{{\pi(j)}}(\xi) }\\\nonumber \approx& \frac{(\tilde{\eta}_2-\xi) }{2M   \gamma(2, R_{\mathcal{D}}\phi )   }  \left(2+\frac{\pi^2 M^2 \Delta^2}{18}\right)\\\nonumber &\times
    \left( \gamma(2,R_{\mathcal{D}}\phi)+ \phi^{-\alpha}\gamma(\alpha+2,R_{\mathcal{D}}\phi)\right),
\end{align}
if $\xi<\tilde{\eta}_2$, otherwise $F_{\mathcal{S}_2|K}(\tilde{\eta}_2)=0$.
\end{itemize}

\subsubsection{The impact of $\xi$ on the users' outage probabilities, $\mathrm{P}_{\mathcal{S}_k}^o$}
We first focus on the user selected to act as the weak user, whose diversity gain is shown in the following lemma.
\begin{lemma}
For the two considered choices of the threshold, i.e., either $\xi\dot\sim\frac{1}{\rho^x}$ or $\xi$ is a constant, the diversity order of the user selected to act as the weak user is always one.
\end{lemma}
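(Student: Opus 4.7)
The plan is to decompose
\[
\mathrm{P}_{\mathcal{S}_1}^o = F_{\mathcal{S}_1|K}(\tilde{\eta}_1)\,\bigl(1-\mathrm{P}(|\mathcal{S}_1|=0)\bigr) + \mathrm{P}(|\mathcal{S}_1|=0)\,F_{\mathcal{S}_2|K}(\tilde{\eta}_1)
\]
and to separately track the high-SNR order of each summand in the two $\xi$ regimes. Throughout, I will exploit that $\tilde{\eta}_1\dot\sim\frac{1}{\rho}$, that $F_{\pi(j)}(y)\sim Cy$ as $y\rightarrow 0$ with a positive constant $C$ (from the expansion established earlier in Section III), and the elementary observation that if any one nonnegative summand in $\mathrm{P}_{\mathcal{S}_1}^o$ is $\dot\sim\frac{1}{\rho}$ while the other is $O(\frac{1}{\rho})$, the whole probability is $\dot\sim\frac{1}{\rho}$, pinning the diversity at exactly one.

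For constant $\xi$, the prefactor $1-\mathrm{P}(|\mathcal{S}_1|=0)=1-(1-F_{\pi(j)}(\xi))^K$ is a positive, $\rho$-independent constant, and the expansion already established in the preceding subsection gives $F_{\mathcal{S}_1|K}(\tilde{\eta}_1)\dot\sim\frac{1}{\rho}$. For the second summand, at sufficiently high SNR one has $\tilde{\eta}_1<\xi$, which by the piecewise definition of $F_{\mathcal{S}_2|K}$ forces it to vanish. Hence the first summand alone controls the decay, and the diversity order is exactly one.

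For $\xi\dot\sim\frac{1}{\rho^x}$ with $x>0$, I will combine $F_{\mathcal{S}_1|K}(\tilde{\eta}_1)\approx \min\{\tilde{\eta}_1,\xi\}/\xi$ with $1-\mathrm{P}(|\mathcal{S}_1|=0)\approx K F_{\pi(j)}(\xi)\dot\sim\frac{1}{\rho^x}$, so the first summand is of order $\min\{\tilde{\eta}_1,\xi\}$. The second summand has $\mathrm{P}(|\mathcal{S}_1|=0)\rightarrow 1$, while $F_{\mathcal{S}_2|K}(\tilde{\eta}_1)$ either vanishes (when $\xi\geq\tilde{\eta}_1$) or equals $(F_{\pi(j)}(\tilde{\eta}_1)-F_{\pi(j)}(\xi))/(1-F_{\pi(j)}(\xi))\dot\sim\tilde{\eta}_1\dot\sim\frac{1}{\rho}$ (when $\xi<\tilde{\eta}_1$, using linearity of $F_{\pi(j)}$ near zero). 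I will then split on the sign of $x-1$: for $x\leq 1$ the first summand is $\dot\sim\frac{1}{\rho}$ and the second is zero or smaller; for $x>1$ the first summand is $\dot\sim\frac{1}{\rho^x}$, strictly faster than $\frac{1}{\rho}$, while the second contributes $\dot\sim\frac{1}{\rho}$. In both sub-cases $\mathrm{P}_{\mathcal{S}_1}^o\dot\sim\frac{1}{\rho}$.

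The main subtlety is the boundary $x=1$, where $\tilde{\eta}_1$ and $\xi$ are of the same order and $\min\{\tilde{\eta}_1,\xi\}/\xi$ tends to a constant in $(0,1]$; one must verify that this limit is strictly positive, so that no hidden cancellation produces a faster decay. This is immediate from the strict positivity of the leading constants in the linear expansions of $\tilde{\eta}_1$ and $F_{\pi(j)}(\xi)$. The remainder is routine order-of-magnitude bookkeeping, giving the claimed diversity order of one in every case.
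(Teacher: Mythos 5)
Your proposal is correct and follows essentially the same route as the paper: the identical decomposition of $\mathrm{P}_{\mathcal{S}_1}^o$ into the $|\mathcal{S}_1|\neq 0$ and $|\mathcal{S}_1|=0$ contributions, the linearization $F_{\pi(j)}(y)\approx Cy$ near zero, the observation that $F_{\mathcal{S}_2|K}(\tilde{\eta}_1)=0$ for constant $\xi$, and the same case split on $x$ versus $1$ (your $1-\mathrm{P}(|\mathcal{S}_1|=0)\approx KF_{\pi(j)}(\xi)$ is just the paper's $\sum_{n=1}^{K}\mathrm{P}(|\mathcal{S}_1|=n)\approx Kc_2\xi$ in different notation). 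Your explicit remark on the boundary case $x=1$ is a slightly more careful rendering of a point the paper passes over quickly, but the argument is the same.
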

\begin{proof}
For notational simplicity, let $c_2=\frac{ \left(2+\frac{\pi^2 M^2 \Delta^2}{18}\right)
     }{2M   \gamma(2, R_{\mathcal{D}}\phi )   }\left( \gamma(2,R_{\mathcal{D}}\phi)+ \phi^{-\alpha}\gamma(\alpha+2,R_{\mathcal{D}}\phi)\right)$. It can be shown   that the probability of having $n$ users in group $\mathcal{S}_2$ can be approximated as follows:
\begin{align}
\mathrm{P}(|\mathcal{S}_2|=n)  =& {K \choose n} \left(F_{{\pi(j)}}(\xi)\right)^{K-n} \left(1-F_{{\pi(j)}}(\xi)\right)^n  \\\nonumber \approx&  {K \choose n}  c_2^{K-n}  \xi ^{K-n},
\end{align}
if $\xi$ approaches zero. If $\xi$ is not a function of $\rho$, neither is this probability.

\begin{itemize}
\item If $\xi$ is a constant,  the outage probability of  the user selected to act as the weak user can be simplified as follows:
 \begin{align}
\mathrm{P}_{\mathcal{S}_1}^o =&
  F_{\mathcal{S}_1|K}(\tilde{\eta}_1)
 \sum^{K}_{n=1}\mathrm{P}(|\mathcal{S}_1|=n)   \dot\sim \frac{1}{\rho},
 \end{align}
since $F_{\mathcal{S}_2|K}\left(\tilde{\eta}_1\right)=0$, $\sum^{K}_{n=1}\mathrm{P}(|\mathcal{S}_1|=n)$ is a constant and $  F_{\mathcal{S}_1|K}(\tilde{\eta}_1)\dot\sim \frac{1}{\rho}$ as explained in \eqref{xxx1}. Therefore the user's diversity order is one for this choice of $\xi$.

\item If $\xi\dot\sim \frac{1}{\rho^x}$, $x> 0$, the outage probability for the user selected   to act as the weak user can be approximated as follows:
 \begin{align}
\mathrm{P}_{\mathcal{S}_1}^o =&
  F_{\mathcal{S}_1|K}(\tilde{\eta}_1)
 \sum^{K}_{n=1}\mathrm{P}(|\mathcal{S}_1|=n)  \\\nonumber &+\mathrm{P}(|\mathcal{S}_1|=0)  F_{\mathcal{S}_2|K}\left(\tilde{\eta}_1\right)\\\nonumber \approx&
   \frac{  \min\{\tilde{\eta}_1,\xi\}}{   \xi } \sum^{K}_{n=1}
 {K \choose n}  c_2^{n}  \xi ^{n}  +  c_2\max\{0,\tilde{\eta}_1-\xi\}\\\nonumber
 \approx&
   \frac{  \min\{\tilde{\eta}_1,\xi\}}{   \xi }
 K  c_2  \xi   +  c_2\max\{0,\tilde{\eta}_1-\xi\},
\end{align}
which is always at the order of $\frac{1}{\rho}$ as explained in the following.

If $x>1$, $\min\{\tilde{\eta}_1,\xi\}=\xi$ and $\max\{0,\tilde{\eta}_1-\xi\}\approx \tilde{\eta}_1$. These two observations lead to the following approximation:
 \begin{align}
\mathrm{P}_{\mathcal{S}_1}^o
 \approx&
    \xi
 K  c_2      +  c_2\tilde{\eta}_1  \dot\sim \frac{1}{\rho},
\end{align}
since $\tilde{\eta}_1$ is dominant.

If $x=1$, we have the following approximation:
 \begin{align}
\mathrm{P}_{\mathcal{S}_1}^o
 \approx&
     \min\{\tilde{\eta}_1,\xi\}
 K  c_2     +  c_2\max\{0,\tilde{\eta}_1-\xi\}\dot\sim \frac{1}{\rho},
\end{align}
since both $\min\{\tilde{\eta}_1,\xi\}$ and $|\tilde{\eta}_1-\xi|$   are at the order of $\frac{1}{\rho}$.

Further, if $0< x<1$, we have the following approximation:
 \begin{align}
\mathrm{P}_{\mathcal{S}_1}^o
 \approx&
       \tilde{\eta}_1
 K  c_2      \dot\sim \frac{1}{\rho}.
\end{align}
Therefore, we can conclude that,   as long as $\xi\dot\sim \frac{1}{\rho^x}$, $x>0$, the diversity order of the user selected to act as the weak user is one.

\end{itemize}
Since the user's diversity order is one for both cases,   the proof is complete.
\end{proof}

However, the impact of $\xi$ on $\mathrm{P}_{\mathcal{S}_2}^o$ is more complicated as illustrated in the following:
\begin{itemize}
\item
If   $\xi$ is a constant, the diversity order of the user selected to act as the strong user is one, since
\begin{align}\nonumber
\mathrm{P}_{\mathcal{S}_2}^o =& \mathrm{P}(|\mathcal{S}_2|=0) F_{\mathcal{S}_1|K}\left(\tilde{\eta}_2\right)\dot\sim \frac{1}{\rho},
\end{align}
which is due to  the following facts:   $F_{\mathcal{S}_2|K}(\tilde{\eta}_2)=0$,  $\mathrm{P}(|\mathcal{S}_2|=0)$ is a constant and $F_{\mathcal{S}_1|K}\left(\tilde{\eta}_2\right)\dot\sim \frac{1}{\rho}$ as explained in \eqref{xxx1}.

\item
If   $\xi\dot\sim \frac{1}{\rho^x}$, $x\geq 1$, the outage probability for the user selected to act as the strong user can be approximated as follows:
\begin{align}\nonumber
\mathrm{P}_{\mathcal{S}_2}^o =&
  F_{\mathcal{S}_2|K}(\tilde{\eta}_2)
 \sum^{K}_{n=1}\mathrm{P}(|\mathcal{S}_2|=n) \\\nonumber & +\mathrm{P}(|\mathcal{S}_2|=0)    F_{\mathcal{S}_1|K}\left(\tilde{\eta}_2\right)\\ \label{diversity gain} \approx&
  c_2\max\{0,\tilde{\eta}_2-\xi\}  +c_2^{K}  \xi ^{K}   \frac{  \min\{\tilde{\eta}_2,\xi\}}{   \xi }.
\end{align}
As can be seen from \eqref{diversity gain}, the choice of the threshold $\xi$ has significant  impact on the achievable diversity gain. For example, a full diversity gain of $K$ can be obtained by using the following choice of $\xi$:
\begin{align}
\xi = \tilde{\eta}_2 - \frac{1}{\rho^K}.
\end{align}
\end{itemize}
\section{Random Beamforming: A Multiple-Beam Case}
\subsection{System Model and Outage Performance}Consider a scenario in which   the base station will form $N$, $1<N\leq M$, orthonormal   beams, denoted by $\mathbf{p}_m$, $1\leq m\leq N$, where $\mathbf{p}_m^H\mathbf{p}_m=1$ and $\mathbf{p}_m^H\mathbf{p}_n=0$ if $m\neq n$.  These beamforming vectors are predefined, and it is assumed that they are known to the base station and the users prior to transmission. Following \cite{7279196} and \cite{7397837}, these $N$ orthonormal beamforming vectors can be constructed as follows:
\begin{align}
\mathbf{p}_m = \mathbf{a}\left(\zeta +\frac{2(m-1)}{N}\right),
\end{align}
for $1\leq m \leq N$, where $\zeta$ denotes a random variable following a uniform distribution between $-1$ and $1$. For notational  simplicity, we denote $\zeta +\frac{2(m-1)}{N}$ by $\bar{\theta}_m$. Again  this beamformer can also be  viewed   as   a special case of the  hybrid precoding design in \cite{7445130}, in which  the fully-connected architecture is used with $N$ radio frequency chains, $M$ antennas and a digital precoding matrix set as an identity matrix.

 Prior to   downlink transmission, the base station will first broadcast pilot signals on these $N$ orthogonal beams. Similarly to $\mathcal{D}_{\theta}$, define  $\mathcal{D}_{\theta_m}$ as the wedge-shaped  sector around $\bar{\theta}_m$ with a central angle of $2\Delta$, as shown in Fig. \ref{system_model}.    Only the users that fall into the sector $\mathcal{D}_{\theta_m}$ will participate in the NOMA transmission on beam $m$. Denote the number of users in $\mathcal{D}_{\theta_m}$ by $K_m$ and the $k^{\rm th}$ user's channel by $\mathbf{h}_{m,k}$, $1\leq k \leq K_m$.
   Each user will measure its effective channel gain on its corresponding beam, where  user $k$'s effective channel gain on the $m$-th beam  is given by $
 |\mathbf{h}_{m,k}^H\mathbf{p}_m|^2$.
Without loss of generality,   we assume that the base station schedules   user $i$ and user $j$ on beam $m$, to act as the weak and strong users, respectively.

  Therefore, the base station will superimpose two users' messages   on each of the  $N$ beams  as follows:
 \begin{align}
\sum^{N}_{m=1} \mathbf{p}_m \left(\beta_{m,1} s_{m,i}+\beta_{m,2} s_{m,j}\right),
 \end{align}
 where $\beta_{m,1}^2+\beta_{m,2}^2=1$.

  Therefore,  user $j$ on beam $m$ will receive the following observation:
 \begin{align}
 y_{m,j} =& \mathbf{h}_{m,j}^H \sum^{N}_{n=1} \mathbf{p}_n \left(\beta_{n,1} s_{n,i}+\beta_{n,2} s_{n,j}\right)+n_{m,j}
\\\nonumber =&\mathbf{h}_{m,j}^H   \mathbf{p}_m \left(\beta_{m,1} s_{m,i}+\beta_{m,2} s_{m,j}\right) \\\nonumber &+ \mathbf{h}_{m,j}^H \sum^{N}_{n=1,n\neq m}   \mathbf{p}_n \left(\beta_{n,1} s_{n,i}+\beta_{n,2} s_{n,j}\right) +n_{m,j},
 \end{align}
 where $n_{m,j}$ denotes additive Gaussian noise.

 User $j$ on beam $m$  will first decode the message to user $i$ in the same pair,  and then remove this message from its observation. Such SIC needs to be carried out before its own message is decoded.   As a result, the SINR for  user $j$ on beam $m$  to decode its partner's message   can be expressed as follows:
 \begin{align}
  {\text{SINR}}_{m, i\rightarrow j} =&\frac{|\mathbf{h}_{m,j}^H   \mathbf{p}_m|^2 \beta^2_{m,1} }{|\mathbf{h}_{m,j}^H   \mathbf{p}_m|^2 \beta^2_{m,2} +\underset{ n\neq m}{\sum} |\mathbf{h}_{m,j}^H \mathbf{p}_n|^2 +\frac{1}{\rho}} .
 \end{align}
Define   $R_{m,1}$ as the targeted rate for user $i$  on beam $m$ and $\epsilon_{m,1}=2^{R_{m,1}} -1$,  where  $\epsilon_{m,2}$ and $R_{m,2}$ are  defined for user $j$ similarly. If ${\text{SINR}}_{m, i\rightarrow j}\geq \epsilon_{m,1}$, intra-group interference can be cancelled and the user can decode its own information with the following SINR:
\begin{align}
 \text{SINR}_{m,j} =&\frac{|\mathbf{h}_{m,j}^H   \mathbf{p}_m|^2 \beta^2_{m,2} }{\underset{ n\neq m}{\sum} |\mathbf{h}_{m,j}^H \mathbf{p}_n|^2 +\frac{1}{\rho}} .
 \end{align}

 User $i$ on beam $m$ will decode its own message directly  with the following SINR:
 \begin{align}
  {\text{SINR}}_{m,i} =&\frac{|\mathbf{h}_{m,i}^H   \mathbf{p}_m|^2 \beta^2_{m,1} }{|\mathbf{h}_{m,i}^H   \mathbf{p}_m|^2 \beta^2_{m,2} +\underset{ n\neq m}{\sum} |\mathbf{h}_{m,i}^H \mathbf{p}_n|^2 +\frac{1}{\rho}} .
 \end{align}

Different from the case with one beam, the users' SINRs are functions  not only of $|\mathbf{h}_{m,i}^H   \mathbf{p}_m|^2$ but also of $|\mathbf{h}_{m,i}^H   \mathbf{p}_n|^2$, $n\neq m$. In conventional non-NOMA scenarios, users can be scheduled according to their SINRs, i.e., the user with the strongest SINR on beam $m$ will be selected to be served on this beam. However, in the addressed scenario, one user can have two different SINR functions. For example,  user $j$'s performance depends  on two different SINR functions,  ${\text{SINR}}_{m, i\rightarrow j}$ and $\text{SINR}_{m,j}$. For the  purpose of illustration, we focus on  a simple user scheduling scheme based on distances, a strategy similar to the one proposed  in Section \ref{subsection ditance}.  Therefore we can order these users who will participate in the NOMA transmission on beam $m$ as follows:
 \begin{align}
d_{m,1}\leq \cdots \leq d_{m,K_m}.
 \end{align}
  Furthermore suppose that user $i$ has a distance larger than that of user $j$, i.e., $i>j$.

The outage probability experienced by user $j$ can be expressed as follows:
\begin{align}
\mathrm{P}_{m,j}^o = 1-\mathrm{P}\left( {\text{SINR}}_{m, i\rightarrow j} > \epsilon_{m,1}, \text{SINR}_{m,j}>\epsilon_{m,2} \right).
\end{align}
Again applying the mmWave channel model, ${\text{SINR}}_{m, i\rightarrow j} $, can be written as follows:
 \begin{align}
  {\text{SINR}}_{m, i\rightarrow j} =& \frac{|a_{m,j}|^2   }{{(1+ d_{m,j}^{\alpha})  }}F_M\left(\pi[\bar{\theta}_m - \theta_{m,j}]\right) \beta^2_{m,1} \\\nonumber &\times \left(\frac{|a_{m,j}|^2   }{{(1+ d_{m,j}^{\alpha})  }}F_M\left(\pi[\bar{\theta}_m - \theta_{m,j}]\right) \beta^2_{m,2}\right. \\\nonumber &\left.+\underset{ n\neq m}{\sum} \frac{|a_{m,j}|^2   }{{(1+ d_{m,j}^{\alpha})  }}F_M\left(\pi[\bar{\theta}_n - \theta_{m,j}]\right) +\frac{1}{\rho}\right)^{-1} .
 \end{align}
 Similarly,  $\text{SINR}_{m,j}$, can be expressed as follows:
 \begin{align}
 \text{SINR}_{m,j} =&\frac{\frac{|a_{m,j}|^2   }{{(1+ d_{m,j}^{\alpha})  }}F_M\left(\pi[\bar{\theta}_m - \theta_{m,j}]\right)  \beta^2_{m,2} }{\underset{ n\neq m}{\sum} \frac{|a_{m,j}|^2   }{{(1+ d_{m,j}^{\alpha})  }}F_M\left(\pi[\bar{\theta}_n - \theta_{m,j}]\right)  +\frac{1}{\rho}} .
 \end{align}

Unlike those SINR functions in the previous sections, the SINRs for the case with multiple beams become more complicated. An interesting observation is that the three factors in the numerator and denominator of ${\text{SINR}}_{m, i\rightarrow j}$ share the same fading coefficient. In this case, the outage probability of user $j$ on beam $m$ can be expressed as shown in \eqref{figurex} at the tope of the following page,
\begin{figure*}
\begin{align}\label{figurex}
\mathrm{P}_{m,j}^o  =&1-
 \int^{\bar{\theta}+\Delta}_{\bar{\theta}-\Delta}\int^{R_{\mathcal{D}}}_0 \exp \{
 -\max \{
  \frac{\frac{\epsilon_{m,1}}{\rho}(1+ d_{m,j}^{\alpha})  }{
  F^m_{j,m} \beta^2_{m,1}
  - \epsilon_{m,1}F^m_{j,m} \beta^2_{m,2}
  -\underset{ n\neq m}{\sum}  \epsilon_{m,1}F^m_{j,n}
  }  ,\\\nonumber &
  \frac{\frac{\epsilon_{m,2}}{\rho}{(1+ d_{m,j}^{\alpha})  }}{
   F^m_{j,m}  \beta^2_{m,2}
  -\underset{ n\neq m}{\sum}  \epsilon_{m,2}F^m_{j,n}
  }
    \}
   \} \frac{f_{d_j}(r)}{2\Delta} drd\theta,
\end{align}
\end{figure*}
if $
  F^m_{j,m} \beta^2_{m,1}>
   \epsilon_{m,1}F^m_{j,m} \beta^2_{m,2}
  +\underset{ n\neq m}{\sum}  \epsilon_{m,1}F^m_{j,n} $ and $F^m_{j,m}  \beta^2_{m,2}
  >\underset{ n\neq m}{\sum}  \epsilon_{m,2}F^m_{j,n}  $, otherwise the outage probability will be always one,
where $F^m_{j,n}\triangleq F_M\left(\pi[\bar{\theta}_n - \theta_{m,j}]\right) $. The outage probability at user $i$ can be obtained similarly.

\subsection{Asymptotic Performance Analysis}
 Without loss of generality, we focus on the first beam, i.e., $m=1$. In this case, the factor $F^m_{j,n}$ can be written as follows:
\begin{align}
F^1_{j,n}&=  F_M\left(\pi[ \theta_{1,j}-\bar{\theta}_n]\right) \\\nonumber & =  F_M\left(\pi\left[ \theta_{1,j}-\bar{\theta}_1 -\frac{2(n-1)}{N}\right]\right),
\end{align}
where $2\leq n \leq N$.
We   have the following Taylor series approximation:
\begin{align}
F^1_{j,n} &= \sum^{\infty}_{l=0}  F_M^{(l)}\left( - \frac{2(n-1)\pi}{N} \right)\frac{(\theta_{1,j}-\bar{\theta}_1 )^l}{l!},
\end{align}
where $F^{(l)}_M(x)$ denotes the $n$-th derivative of $F_M(x)$. Here we assume that the derivatives, $F_M^{(l)}\left( - \frac{2(n-1)\pi}{N} \right)$, exist for all orders.  Assume that the beams are separated with sufficient gaps, and one can expect that $ F_M\left( - \frac{2(n-1)\pi}{N} \right)\rightarrow 0$, for $2\leq n \leq N$.
Further assuming $\Delta \rightarrow 0$,  $(\bar{\theta}_n - \theta_{1,j})$ approaches zero, which means
\begin{align}
F^1_{j,n} \approx&     F_M\left( - \frac{2(n-1)\pi}{N} \right) \\\nonumber & +  F_M^{(1)}\left( - \frac{2(n-1)\pi}{N} \right) (\bar{\theta}_1 - \theta_{1,j}),
\end{align}
where $F_M^{(1)}(x)=\frac{\sin Mx}{1-\cos(x)}-\frac{(1-\cos Mx)\sin x}{M(1-\cos x)^2}$.

 Therefore the sum of the interference terms in the SINR expressions can be approximated as follows:
 \begin{align}
 \underset{ n\neq 1}{\sum}   F^1_{j,n} \approx c_2+c_3 (\bar{\theta}_1 - \theta_{1,j}),
 \end{align}
 where $c_2=  \underset{ n\neq 1}{\sum}   F_M\left(-  \frac{2(n-1)\pi}{N} \right)$ and $c_3= \underset{ n\neq 1}{\sum}    F_M^{(1)}\left(  -\frac{2(n-1)\pi}{N} \right) $.
For the case  $n=1$, we have
 \begin{align}
  F^1_{j,1} \approx M, 
 \end{align}
which is obtained from \eqref{approximtion x}.
As a result, at high SNR, the outage probability experienced by user $i$ can be expressed as follows:
\begin{align}\nonumber
\mathrm{P}_{1,i}^o  \approx&\frac{1}{2\Delta}
 \int^{\bar{\theta}+\Delta}_{\bar{\theta}-\Delta}\int^{R_{\mathcal{D}}}_0
    \frac{\frac{\epsilon_{1,1}}{\rho}(1+ d_{1,i}^{\alpha})  }{
  F^1_{i,1} \beta^2_{1,1}
  - \epsilon_{1,1}F^1_{i,1} \beta^2_{1,2}
  -\underset{ n\neq 1}{\sum}  \epsilon_{1,1}F^1_{i,n}
  }\\\nonumber &\times  f_{d_i}(r) drd\theta
   \\\nonumber
   \approx& MQ_{1,i}
 \int^{ \Delta}_{ -\Delta}
  \frac{\frac{\epsilon_{1,1}}{\rho}   }{  ( Mc_4-c_2 \epsilon_{1,1}) -  c_3 \epsilon_{1,1}y
  } dy\\
            \approx&
  \frac{2\Delta MQ_{1,i}\epsilon_{1,1}}{\rho( Mc_4-c_2 \epsilon_{1,1})},
\end{align}
where $c_4= \beta^2_{1,1}-\epsilon_{1,1}  \beta^2_{1,2}$. The outage probability for user $j$ can be obtained similarly. As a result, following steps similar to those in Section \ref{subsection ditance}, the outage sum rate and the outage probabilities can be obtained.

\begin{figure}
  \centering
  \subfigure[Sum Rates]{
  \includegraphics[width=0.4\textwidth]{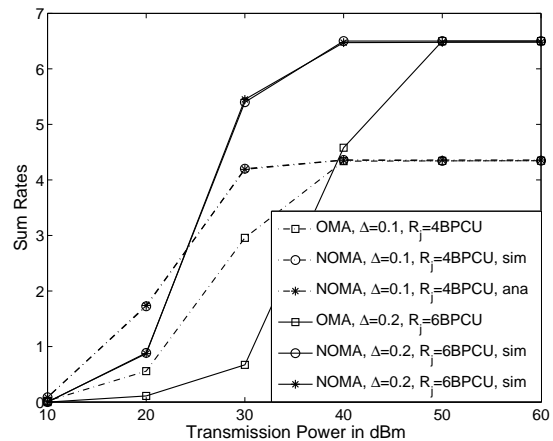}}
  \subfigure[Outage Probabilities $R_2=6$ BPCU and $K=5$]{
  \includegraphics[width=0.4\textwidth]{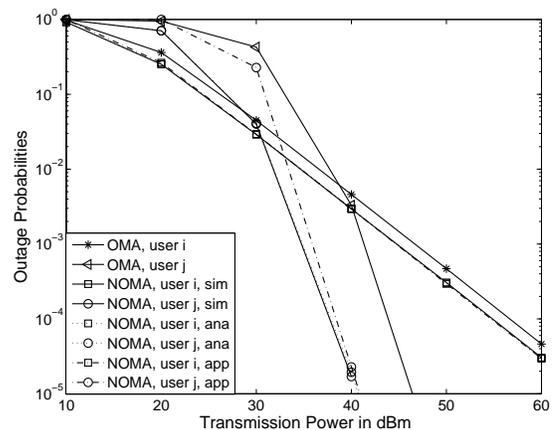}}\vspace{-0.5em}
    \caption{The performance of mmWave-NOMA and mmWave-OMA with  perfect CSI.  $M=4$,  $\lambda=1$, $\Delta=0.1$,      $R_i=0.5$ BPCU, $i=1$ and $j=K$. }
     \label{fig1}\vspace{-1.5em}
     \end{figure}

\section{Numerical Studies}
In this section, the performance of the proposed mmWave-NOMA transmission schemes are evaluated by using computer simulations, where the accuracy of the developed analytical results will also be verified.  The path loss exponent is set as $\alpha=2$, since line-of-sight links are focused. The radius  of $\mathcal{D}$  is $R_{\mathcal{D}}=10$m,  the noise power is $-30$dBm, the blockage parameter is set as $\phi=0.1$, and $\beta_i^2=\frac{3}{4}$ and $\beta_j^2=\frac{1}{4}$ are used as the NOMA power allocation coefficients. It is worth pointing out that our analytical results are developed for arbitrary choices of these parameters, and using other choices of these parameters will lead to conclusions similar to those drawn  in this section.

In Fig. \ref{fig1}, the performance of the proposed random beamforming scheme in mmWave-NOMA systems with perfect CSI is studied, where the mmWave-OMA scheme is used as a benchmark. Fig.\ref{fig1}.(a) shows the outage sum rates achieved by the two MA schemes, and Fig. \ref{fig1}.(b) shows  the outage probabilities of the two transmission schemes. As can be observed from Fig. \ref{fig1}.(a), the use of NOMA can yield a significant sum rate gain over the OMA scheme, and this gain increases  when  the targeted data rate of the strong user is increased. For example, for $R_j=4$ bits per channel use (BPCU), the gain of mmWave-NOMA over mmWave-OMA is $1$ BPCU, when the transmission power of the base station is $30$ dBm. When $R_j$ is increased to $6$ BPCU, the performance gain of the NOMA scheme over OMA becomes $5$ BPCU.  On the other hand, Fig. \ref{fig1}.(b) shows that the mmWave-NOMA scheme can also effectively reduce the outage probability, compared to OMA, particularly for the user with the stronger  channel. It is also important to point out that the developed approximation results for the sum rate and the outage probabilities are tight at high SNR, and the developed exact expressions match  the simulation results perfectly.

\begin{figure}
  \centering
  \subfigure[Sum Rates]{
  \includegraphics[width=0.4\textwidth]{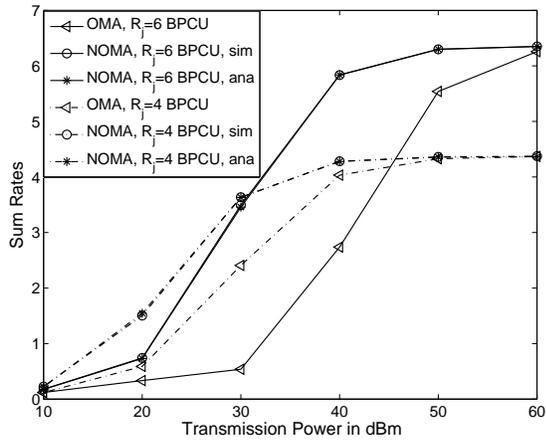}}
  \subfigure[Outage Probabilities $R_2=6$ BPCU]{
  \includegraphics[width=0.4\textwidth]{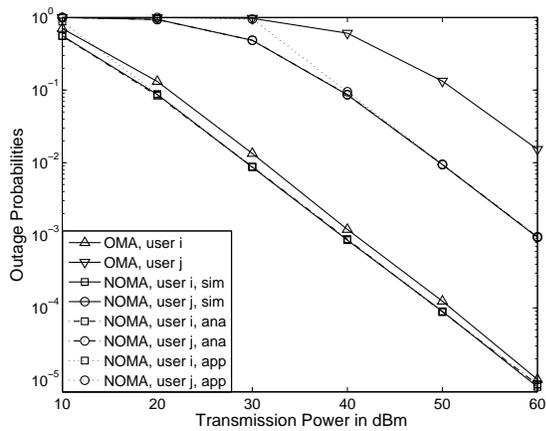}}\vspace{-0.5em}
    \caption{The performance of mmWave-NOMA and mmWave-OMA by using the  distance information only.  $M=4$, $\lambda=1$,  $\Delta=0.1$,   $R_i=0.5$ BPCU, $i=4$ and $j=1$. }\vspace{-1.5em}
     \label{fig2}
     \end{figure}

In Fig. \ref{fig2}, the performance of the mmWave-NOMA and mmWave-OMA schemes is compared, for the situation in which  the base station has access to the users' distance information only. The trivial cases in which  the $i$-th and $j$-th nearest nodes do not exist can cause error floors to the outage probabilities. Therefore,  we slightly change the definition of the outage probability by   counting only the cases in which  the two nodes can be found in $\mathcal{D}_{\theta}$. Take the outage probability for user $i$ as an example. The outage probability curves are obtained by using $\frac{n_3}{n_1-n_2}$, where  $n_1$ denotes the total number of simulations, $n_2$ denotes the number of events in which user $i$ cannot be found in $\mathcal{D}_{\theta}$, and $n_3$ denotes the number of outage events by excluding the outage events caused by the case in which  user $i$ cannot be found (i.e., $n_2$).  This is consistent with  \eqref{Fxx} since the   probability shown in the figure is equivalent to the following one
\begin{align}
\frac{F^o_{k} - \sum^{k-1}_{n=0}\mathrm{P}(K=n)}{1-\sum^{k-1}_{n=0}\mathrm{P}(K=n)}.
\end{align}
As can be observed from both figures,   the use of NOMA can yield  a significant performance gain in the sum rate and effectively reduce the outage probability, compared to the OMA scheme, even if only the distance information is available to the base station. Again both figures also demonstrate the accuracy of the developed analytical results.

\begin{figure}
  \centering
  \subfigure[Sum Rates]{
  \includegraphics[width=0.4\textwidth]{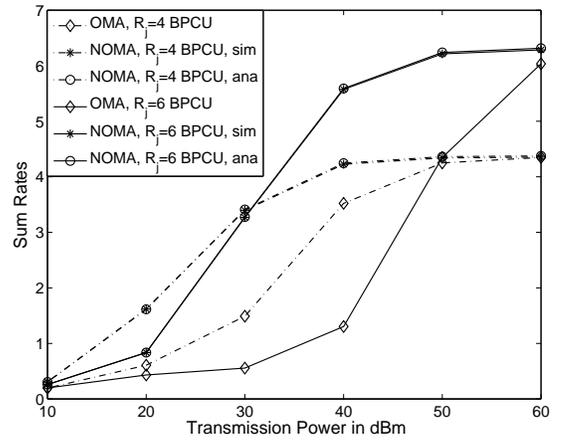}}
  \subfigure[Outage Probabilities $R_2=6$ BPCU and $K=5$]{
  \includegraphics[width=0.4\textwidth]{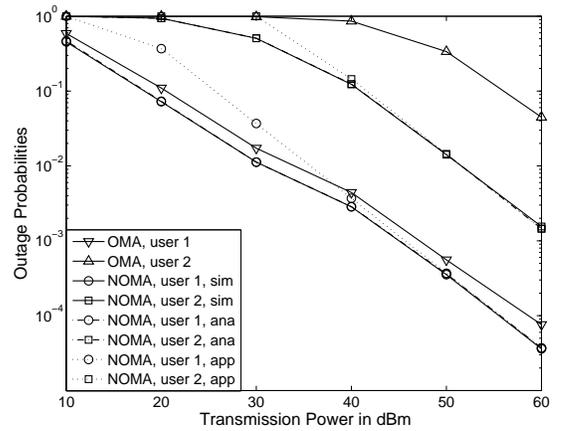}}\vspace{-0.5em}
    \caption{The performance of mmWave-NOMA and mmWave-OMA with one-bit feedback. $M=4$,  $\lambda=1$,   $\Delta=0.1$,   and $R_i=0.5$ BPCU. The threshold is set as $\frac{1}{2}(\tilde{\eta}_1 +\tilde{\eta}_2)$.}
     \label{fig3}\vspace{-1.5em}
     \end{figure}

     \begin{figure}[!t]
\centering
\includegraphics[width=0.4\textwidth]{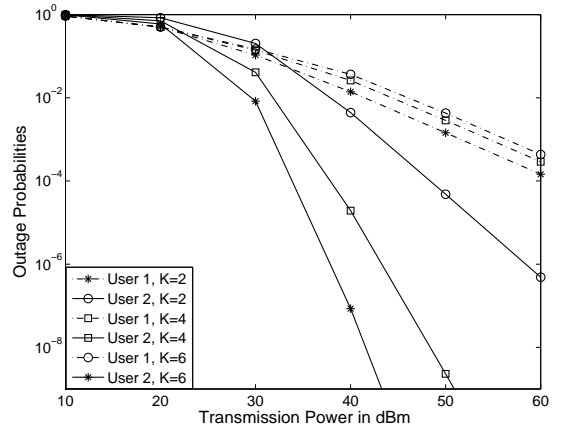}\vspace{-0.5em}
\caption{The impact of the threshold on the performance of the one-bit feedback scheme.   $M=4$,  $\lambda=1$, $\Delta=0.1$,    $R_1=1.5$ BPCU and $R_2=4$ BPCU. The threshold is set as $(\tilde{\eta}_2-\frac{1}{\rho^K})$.} \label{fig4}\vspace{-1.5em}
\end{figure}

 Fig. \ref{fig3} shows the sum rate and the outage probabilities achieved by the one-bit feedback scheme, and Fig. \ref{fig4} shows the impact of the threshold $\xi$ on   the users' outage probabilities and   diversity gains. Consistent with the previous figures, Fig. \ref{fig3} demonstrates that the use of NOMA can significantly improve the performance of mmWave communications with random beamforming. Recall that the choice of the threshold has a significant  impact on the performance of the one-bit feedback scheme. As discussed in Section \ref{subsection 2}, the diversity gain of the strong user is particularly  sensitive to the choice of the threshold, and a choice of $\xi=\tilde{\eta}_j-\frac{1}{\rho^K}$ yields   a diversity gain of $K$, whereas the diversity gain of the weak user is always one for the discussed choices of $\xi$. Fig. \ref{fig4} clearly confirms these analytical results and  demonstrates the impact of $\xi$ on the diversity gain. For example,  the slope of  the strong user's outage probability curve becomes larger when increasing $K$, which demonstrates that the diversity gain of this user is an increasing function of $K$. On the other hand, the slope for the other user's outage probability curve is always the same, which shows that the diversity gain of the weak user is not sensitive to the choice of the threshold.

\begin{figure}
  \centering
  \subfigure[Sum Rates]{
  \includegraphics[width=0.4\textwidth]{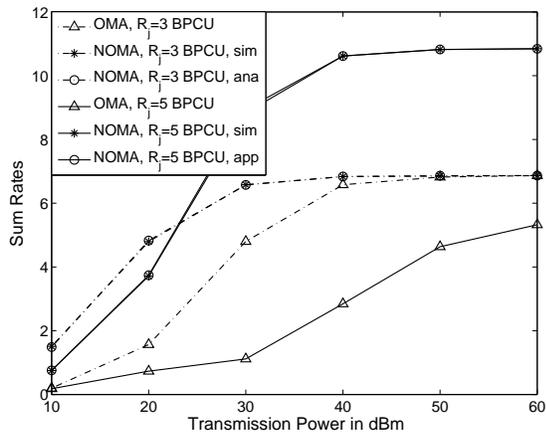}}
  \subfigure[Outage Probabilities $R_j=5$ BPCU]{
  \includegraphics[width=0.4\textwidth]{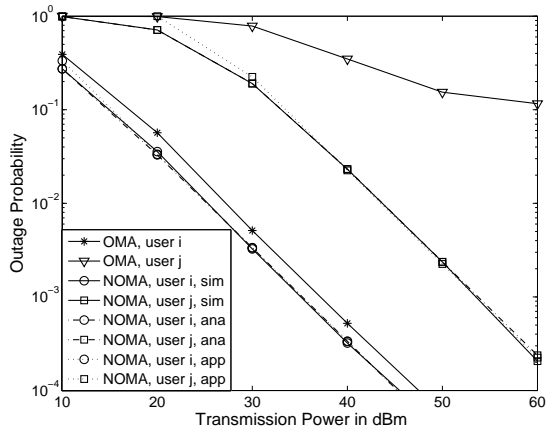}}\vspace{-0.5em}
    \caption{The performance of mmWave-NOMA and mmWave-OMA with multiple beams. $M=8$, $N=4$, $\lambda=10$,  $\Delta=0.01$,  $i=3$, $j=1$ and  $R_i=0.5$ BPCU. }
     \label{fig5}\vspace{-2.5em}
     \end{figure}

In Fig. \ref{fig5}, the performance of the proposed mmWave-NOMA scheme with multiple randomly generated beams is illustrated, where   OMA   is used as the benchmark again. Different from the previous cases with a single beam, the use of multiple beams means that users in the mmWave-NOMA system suffer more interference. Particularly, even if the strong user in a NOMA pair can use SIC to remove its parter's message, it still experiences interference from the users on other beams. However, the   fact   that mmWave propagation is highly directional  can   be used to effectively reduce such inter-beam interference.  The reason is that the inter-beam interference, $\underset{ n\neq m}{\sum} \frac{|a_{m,j}|^2   }{{(1+ d_{m,j}^{\alpha})  }}F_M\left(\pi[\bar{\theta}_n - \theta_{m,j}]\right) $,  is a function of the angle difference between a user's channel vector and the interference beams. With a choice of $\Delta=0.01$, i.e., the central angle is about $4$ degrees, the inter-beam interference is significantly suppressed, as shown in the two figures.   The superior performance of NOMA can also be  clearly demonstrated by the fact that the outage probability for the strong user in OMA cannot be reduced to zero, regardless of how large the transmission power is. On the other hand, the use of NOMA can reduce the outage probability rapidly by increasing the transmission power, which is due to the fact that NOMA can realize  better spectral efficiency.

Finally, we compare the mmWave-NOMA scheme with perfect CSI to the two schemes with limited CSI. Intuitively, the cases with limited CSI will result in some performance degradation, but the simulation results in Fig. \ref{fig6} indicate that the schemes  with limited feedback can yield an increase of the system throughput, as explained in the following.  Take a four-user case as an example, where the users are ordered as in \eqref{order}. Suppose that the perfect-CSI based scheme is to schedule user $1$ and user $2$, i.e., two users with poor channel conditions. Because of the ordering ambiguity caused by the use of partial CSI, the one-bit feedback scheme might schedule user $3$ and user $4$. According to the broadcast capacity region in \cite{Cover1991}, scheduling users with better channel conditions yields a larger sum rate, which means that it is possible for the schemes with partial CSI to outperform the one with perfect CSI.  Fig.\ref{fig6} clearly demonstrates this phenomenon.   For example, given $K$ users, when the   user with the worst channel condition is paired with the  user with the second worst channel condition. The schemes with limited  feedback can outperform the scheme with   perfect CSI, when the transmission power is $20$ dBm.  It is worth pointing out that a similar    observation has been previously reported in \cite{Zhiguo_massive} in the context of massive MIMO.

     \begin{figure}[!t]
\centering
\includegraphics[width=0.4\textwidth]{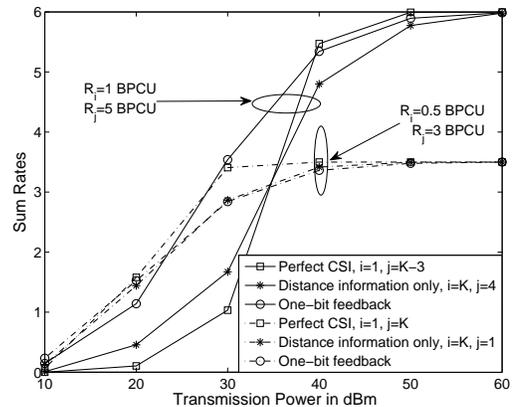}\vspace{-0.5em}
\caption{Sum-rate comparison between the proposed random beamforming transmission schemes.    $M=4$, $K=5$, $\lambda=1$, and $\Delta=0.4$. The threshold is set as $\frac{1}{2}(\tilde{\eta}_1+\tilde{\eta}_2)$.} \label{fig6} \vspace{-1.5em}
\end{figure}

\section{Conclusions}
In this paper, we have  investigated the coexistence between NOMA and mmWave communications. We have first considered   the application of random beamforming to the addressed  mmWave-NOMA scenario, by focusing on the case with a single beam generated at the base station.  Stochastic geometry has been applied to characterize the performance of the mmWave-NOMA transmission scheme, by using the key features of mmWave networks, i.e., mmWave transmission is highly directional and potential blockages will thin the user distribution. Two beamforming approaches that can effectively reduce  feedback have also been proposed to the addressed mmWave-NOMA communication networks, and the performance for the scenario with multiple beams has also been studied.  The provided simulation results have demonstrated that the developed analytical results are accurate, and the proposed  mmWave-NOMA transmission schemes yield significant performance gains over conventional mmWave-OMA schemes.
\vspace{-0.5em}
 \bibliographystyle{IEEEtran}
\bibliography{IEEEfull,trasfer}
 
  \end{document}